\newcommand{\R}{\mathbb{R}}
\newcommand{\eps}{\varepsilon}
\renewcommand{\epsilon}{\varepsilon}
\newcommand{\opt}{o}
\renewcommand{\paragraph}[1]    {\medbreak\noindent \textbf{#1}}
\newcommand{\nnd}{\mathcal{D}}
\newcommand{\dtype}[1]{{\sf\small #1}}
\newcommand{\libraryname}[1]{{\sc #1}}
\pgfplotsset
{
    tick label style={font=\footnotesize},
    legend style    ={font=\footnotesize},
    label style     ={font=\footnotesize},
    tickwidth       =2pt,
    ylabel near ticks,
    xlabel near ticks,
    legend style     ={draw=none},
    legend cell align=right,
    legend plot pos  =right,
    /pgfplots/narrow area legend/.style=
    {
        legend image code/.code={ \draw[#1] (0cm,-0.06cm) rectangle (0.6cm,0.06cm); }
    },
    /pgfplots/short line legend/.style=
    {
        legend image code/.code={ \draw[#1] (0cm,0cm) rectangle (0.6cm,0cm); }
    },
}
\newcommand{\addaveraged}[2][]
           {\addplot+[#1,error bars/.cd, y dir=both,y explicit] table[x index=0, y index=1, y error index=2] {#2};}
\pgfplotsset{
    geommark/.style={black,mark options={solid,fill=red!80!white},mark=square*},
    nongeommark/.style={black,mark options={solid,fill=blue!80!white},mark=*},
    dionmark/.style={black,mark options={solid,fill=orange!80!white},mark=pentagon*},
    greentrianglemark/.style={black,mark options={solid,fill=green!80!white},mark=triangle*},
    browndiamondmark/.style={black,mark options={solid,fill=brown!80!white},mark=diamond*}
}
\newtheorem{theorem}{Theorem}
\newtheorem{lemma}[theorem]{Lemma}
\newtheorem{proposition}[theorem]{Proposition}
\title{Geometry Helps to Compare Persistence Diagrams}
\author{Michael Kerber
    \footnote{Graz University of Technology, Graz, Austria \texttt{kerber@tugraz.at}}
\and
Dmitriy Morozov
\footnote{Lawrence Berkeley National Laboratory,
Berkeley, CA,  USA \texttt{dmitriy@mrzv.org}}
\and
Arnur Nigmetov
\footnote{Graz University of Technology, Graz, Austria \texttt{nigmetov@tugraz.at}}
}
\begin{document}

\date{}

\maketitle

\begin{abstract}
Exploiting geometric structure to improve the asymptotic complexity of discrete
assignment problems is a well-studied subject.  In contrast, the practical
advantages of using geometry for such problems have not been explored.
We implement geometric variants of the Hopcroft--Karp algorithm for bottleneck
matching (based on previous work by Efrat el al.) and of the auction algorithm
by Bertsekas for Wasserstein distance computation.
Both implementations use k-d trees to replace a linear scan with
a geometric proximity query.
Our interest in this problem stems from the desire to compute
distances between persistence diagrams, a problem that comes up frequently
in topological data analysis.
We show that our geometric matching algorithms lead to a substantial
performance gain, both in running time and in memory consumption,
over their purely combinatorial counterparts.
Moreover, our implementation significantly outperforms the only other implementation
available for comparing persistence diagrams.
\end{abstract}

\section{Introduction}

The \emph{assignment problem} is among the most famous problems in
combinatorial optimization. Given a weighted bipartite
graph $G$ with $(n+n)$ vertices, it asks for a perfect matching with minimal cost.
A common cost function is the minimum of the sum of the $q$-th powers of weights
of the matching edges, for some $q \geq 1 $. We call the solution
in this case the \emph{$q$-Wasserstein matching}
and its cost
the \emph{$q$-Wasserstein distance}.
As $q$ tends to infinity, the Wasserstein distance approaches the
\emph{bottleneck distance}, by definition the minimum of the maximum edge weight over all perfect matchings.
See~\cite{bdm-assignment} for a contemporary discussion of the topic
with links to applications.

We consider the geometric version of the assignment problem, where the vertices
of $G$ are points in a metric space $(X,d)$, and edge weights are determined
by the distance function $d$.
The metric
structure leads to asymptotically improved algorithms that take advantage of
data structures for near-neighbor search. This line of research
dates back to Efrat et al.~\cite{eik-geometry} for the bottleneck distance
and Vaidya~\cite{vaidya-geometry} for the $1$-Wasserstein case. Rich literature has
developed since then, mainly focusing on approximation algorithms for Euclidean metrics
in low and high dimensions; see~\cite{as-approximation} for a recent summary.
On the other hand, there has been no rigorous study of whether
geometry also helps \emph{in practice}.
Our paper is devoted to this question.

We restrict attention to one scenario that motivates our study of the assignment problem.
In the field of \emph{topological data analysis}, the homological information
of a data set is often summarized in a \emph{persistence diagram}.
Such diagrams, themselves point sets in $\R^2$, capture
connectivity of a data set, and, specifically, how the connectivity changes
across various scales~\cite{elz-topological}.
Persistence diagrams are stable: small changes in the data
cause only small changes in the diagram~\cite{ceh-stability,wasserstein-stability}.
Accordingly, the distances between persistence diagrams
have received a lot of attention in applications (e.g., \cite{arc-classification,ggk-topology,gh-exploring}):
where persistence diagrams serve as topological proxies for the input data,
distances between the diagrams serve as proxy measures of the similarity between
data sets.
These distances, in turn, can be expressed as a Wasserstein or a bottleneck
distance between two planar point sets,
using $L_\infty$ as the metric in the plane
(see Section~\ref{sec:prelim} for the precise definition and the reduction).

\paragraph{Our contributions.}
Our contribution is two-fold. First, we provide an experimental study illuminating the advantages
of exploiting geometric structure in assignment problems: we compare mature implementations
of bottleneck and Wasserstein distance computations for the geometric and purely combinatorial versions
of the problem and demonstrate that exploiting the spatial structure improves
running time and space consumption for the matching problem.
Second, by focusing on the setup relevant in topological data analysis,
we provide the fastest implementation for computing distances between persistence diagrams,
significantly improving the implementation in the \libraryname{Dionysus} library~\cite{dionysus}.
The latter prototypical implementation is the only publicly available software for
the problem. Given the importance of this problem in applications,
our implementation is therefore addressing a real need in the community.
Our code is publicly available.
\footnote{\url{https://bitbucket.org/grey_narn/hera}}
This paper contains the following specific contributions:

\begin{itemize}

\item For bottleneck matchings, we follow the approach of Efrat et al.~\cite{eik-geometry}:
they augment the classical combinatorial algorithm of Hopcroft and Karp~\cite{hk-algorithm}
with a geometric data structure to speed up the search for vertices close to query points.
We do not implement their asymptotically optimal but complicated approach.
We instead use a k-d tree data structure~\cite{kd-tree}
to prune the search for matching vertices in remote areas (also proposed by the authors).
As expected, this strategy outperforms the combinatorial version that
linearly scans all vertices.
Several careful design choices are necessary
to obtain this improvement; see Section~\ref{sec:bottleneck}.

\item
For Wasserstein matchings, we implement a geometric
variant of the \emph{auction algorithm}, an approximation algorithm
by Bertsekas~\cite{bertsekas-distributed}.
We use \emph{weighted} k-d trees, again with the goal to
reduce the search range when looking for the best match of a vertex.
A data structure similar to ours appears in~\cite{as-wann}.
Our implementation outperforms
a version of the auction algorithm that does not exploit geometry,
which we implement for comparison, both in terms
of runtime and space consumption.
Both our implementations of the auction algorithm
dramatically outperform \libraryname{Dionysus}, albeit computing
approximations rather than the exact answers as the latter.
\libraryname{Dionysus} uses a variant of the Hungarian
algorithm~\cite{munkres-algorithms};
see Section~\ref{sec:wasserstein}.

\item
We extend our auction implementation to the case of
points with multiplicities, or masses. While this problem can be trivially
reduced to the previous one by replacing a multiple point
with a suitable number of simple copies,
it is more efficient to handle a point with multiplicity as one entity,
splitting it adaptively only when
fractions are matched to different points.
An extension of the auction algorithm to this case
has been decribed by Bertsekas and Casta{\~n}on~\cite{bc-auction}.
We refer to it as \emph{auction with integer masses}.
Our implementation exploits the geometry of the problem in a similar
way as the auction for simple points.
Handling masses imposes a certain overhead that slows
down the computation if the multiplicities are low. However,
our experiments show that the advantage of the auction with integer masses 
becomes apparent already when the average multiplicity is around $10$,
and the performance gap between the two variants of the auction 
increases when the average multiplicity increases;
see Section~\ref{sec:weighted_wasserstein}.

\end{itemize}

A conference version of this article appeared in ALENEX 2016~\cite{kmn-geometry}.
The major novelty of the present version is the discussion of the auction
with integer masses in Section~\ref{sec:weighted_wasserstein}. 
Moreover, we employed a different
variant in the (standard) auction algorithm, which improved the running time
of the geometric version by more than a magnitude.
Technical explanations and updated experimental evaluation compared to~\cite{kmn-geometry}
are discussed in Section~\ref{sec:wasserstein}.

\section{Background}
\label{sec:prelim}

\paragraph{Assignment problem.}
Given a weighted bipartite graph $G=(A\sqcup B,E,w)$, with $|A|=n=|B|$ and a weight function $w:E\rightarrow \R_+$,
a \emph{matching} is a subset $M\subseteq E$ such that every vertex of $A$ and of $B$ is
incident to at most one edge in $M$. These vertices are called \emph{matched}. A matching is \emph{perfect}
if every vertex is matched; equivalently, a perfect matching is a matching of cardinality $n$;
it can be expressed as a bijection $\eta:A\rightarrow B$.

For a perfect matching $M$, the \emph{bottleneck cost} is defined as $\max\{w(e)\mid e\in M\}$,
the maximal weight of its edges. The \emph{$q$-Wasserstein cost} is defined
as $(\sum_{e\in M} w(e)^q)^{1/q}$; for $q=1$, this is simply the sum of the edge weights.
A perfect matching is \emph{optimal}
if its cost is minimal among all perfect matchings of $G$.
In this case, the \emph{bottleneck} or \emph{$q$-Wasserstein cost} of $G$ is the cost
of an optimal matching. If a graph does not have a perfect matching, its
cost is infinite. For $q>1$, the $q$-Wasserstein cost can be reduced to the case $q=1$
with the following simple observation.

\begin{proposition}
\label{prop:q_to_1_reduction}
The $q$-Wasserstein cost of $G=(A\sqcup B,E,w)$ equals $q$-th root of
the $1$-Wasserstein cost of $G'=(A\sqcup B,E,w^q)$, where $w^q$ means
that all edge weights are raised to the $q$-th power.
\end{proposition}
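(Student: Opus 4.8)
The plan is to argue matching-by-matching and then exploit monotonicity of the $q$-th root to interchange the minimization with the root. First I would note that $G$ and $G'$ have identical vertex sets and identical edge sets $E$; only the weights differ. Hence $G$ has a perfect matching if and only if $G'$ does, and in the degenerate case both costs are $+\infty$ and the identity holds trivially. Assume henceforth that a perfect matching exists.

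Next, fix an arbitrary perfect matching $M\subseteq E$. By definition its $q$-Wasserstein cost in $G$ is $\bigl(\sum_{e\in M} w(e)^q\bigr)^{1/q}$, whereas its $1$-Wasserstein cost in $G'$ is $\sum_{e\in M} w^q(e)=\sum_{e\in M} w(e)^q$, since the weight of an edge $e$ in $G'$ is by definition $w(e)^q$. Therefore the $q$-Wasserstein cost of $M$ in $G$ equals the $q$-th root of the $1$-Wasserstein cost of the \emph{same} matching $M$ in $G'$.

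Finally I would minimize over all perfect matchings. Since $q\geq 1$ and all edge weights are nonnegative, the map $t\mapsto t^{1/q}$ is strictly increasing on $\R_+$, so it preserves minimizers: a matching is $q$-Wasserstein optimal for $G$ precisely when it is $1$-Wasserstein optimal for $G'$, and the optimal values obey $\min_M \bigl(\sum_{e\in M} w(e)^q\bigr)^{1/q} = \bigl(\min_M \sum_{e\in M} w(e)^q\bigr)^{1/q}$, which is the assertion. There is no genuine obstacle here; the only points requiring a little care are the bookkeeping when no perfect matching exists and making explicit that monotonicity of $t\mapsto t^{1/q}$ is exactly what licenses pulling the root outside the minimum.
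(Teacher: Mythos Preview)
Your argument is correct. The paper does not actually give a proof of this proposition; it introduces it as a ``simple observation'' and states it without justification, so there is nothing to compare against beyond noting that your write-up supplies exactly the routine details the authors chose to omit.
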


We call a graph $G=(A\sqcup B,E,w)$ \emph{geometric}, if there exists a metric space $(X,d)$
and a map $\phi:A\sqcup B\rightarrow X$ such that for any edge $e=(a,b)\in E$,
$w(e)=d(\phi(a),\phi(b))$. In this case, we generally blur the distinction between
vertices and their embedding and just assume for simplicity that $A\sqcup B\subset X$.
The motivating example of this work is $X=\R^2$ and $d(x,y)=\|x-y\|_\infty$.

\paragraph{Persistent homology and diagrams.}
We are concerned with a particular type of assignment problems in this paper.
Specifically, we are interested in distances studied by the \emph{theory of
persistent homology}, distances that measure topological
differences between objects.
In a nutshell, persistent homology records connectivity of objects ---
connected components, tunnels, voids, and higher-dimensional ``holes''
--- across multiple scales.
\emph{Persistence diagrams} summarize this information as two-dimensional
point sets with multiplicities.
A point $(x,y)$ with multiplicity $m$ represents $m$ features that all appear
for the first time at scale $x$ and disappear at scale $y$.
Features appear before they disappear, so the points lie above the diagonal $x=y$.
The difference $y-x$ is called the \emph{persistence} of a feature.
In addition to the off-diagonal points, the persistence diagram
also contains each diagonal point $(x,x)$, counted with infinite multiplicity.
These additional points are needed for stability (discussed below)
and make the cardinality of every persistence diagram infinite,
even if the number of off-diagonal points is finite.

Given two persistence diagrams $X$ and $Y$, their \emph{bottleneck distance} is defined as
\[W_\infty(X,Y) = \inf_{\eta:X\rightarrow Y} \sup_{x\in X} \|x-\eta(x)\|_\infty,\]
where $\eta$ ranges over all bijections and $\|(x,y)\|_\infty=\max\{|x|,|y|\}$ is the usual $L_\infty$-norm.
Similarly, the \emph{$q$-Wasserstein distance}\footnote{
    Named after Leonid Va\v{s}erste\u{\i}n; see A. M. Vershik \cite{vershik2013long}
for the history of this notion.}
is defined as
\[W_q(X,Y) = \left[ \inf_{\eta:X\rightarrow Y} \sum_{x\in X} \|x-\eta(x)\|_\infty^q\right]^{1/q}.\]

Why are these distances interesting?
Because they are stable~\cite[Ch.\ VIII.3]{ceh-stability,wasserstein-stability,eh-computational}:
a small perturbation of the measured phenomenon, for example, a scalar function
on a manifold, creates only a small change in the persistence diagram --- both
distances reflect this.
The diagonal of a persistence diagram plays a crucial role in stability. Small
perturbations may create new topological features, but their persistence is
necessarily small, making it possible to match them to the points on the
diagonal. We refer the reader to the cited papers for an extensive
discussion.

\paragraph{Persistence distance as a matching problem.}

We assume from now on that persistence diagrams
consist of finitely many off-diagonal points with finite multiplicity
(and all the diagonal points with infinite multiplicity).
In this case, the task of computing $W_\ast(X,Y)$ can be reduced
to a bipartite graph matching problem; we follow the notation and
argument given in~\cite[Ch. VIII.4]{eh-computational}.
Let $X_0$, $Y_0$ denote the off-diagonal points of $X$ and $Y$, respectively.
If $u = (x,y)$ is an off-diagonal point, we denote its orthogonal projection on the diagonal $((x+y)/2,(x+y)/2)$ as $u'$,
which is the closest point to $u$ on the diagonal.
Let $X_0'$ denote the set of all projections of $X_0$, that is
$X_0'=\{ u' \mid u\in X_0\}$. With $Y_0'$ defined analogously as $\{ v' \mid v\in Y_0\}$, we define
$U=X_0\cup Y_0'$ and $V=Y_0\cup X_0'$; both have the same number of points.
We define the weighted complete bipartite graph,
$G=(U\sqcup V,U\times V,c)$,
whose weights are given by the function
\begin{figure}[t]
 \centering

 \begin{tikzpicture}[scale=0.6]
	 \tikzstyle{vertex_a}=[circle,minimum size=4pt,inner sep=0pt,fill=blue]
	 \tikzstyle{vertex_b}=[rectangle,minimum size=4pt,inner sep=0pt,fill=red]

	 \tikzstyle{selected edge} = [draw,line width=1pt,-,green]
	 \tikzstyle{diagonal edge} = [draw,line width=1pt,-,black]
	 \tikzstyle{unselected edge} = [draw,line width=0.5pt,-,gray!80]

	 \node[vertex_a] (a1) at (1,5) {};
	 \node[vertex_a] (a2) at (3,7) {};
	 \node[vertex_a] (b1p) at (4,4) {};
	 \node[vertex_a] (b2p) at (6, 6) {};

	 \node[vertex_b] (b1) at (1,7) {};
	 \node[vertex_b] (b2) at (3, 9) {};
	 \node[vertex_b] (a1p) at (3, 3) {};
	 \node[vertex_b] (a2p) at (5, 5) {};

	 \draw[selected edge] (a1) -- (b1);
	 \draw[selected edge] (a1) -- (b2);
	 \draw[selected edge] (a2) -- (b1);
	 \draw[selected edge] (a2) -- (b2);

	 \draw[selected edge] (a1) -- (a1p);
	 \draw[selected edge] (a2) -- (a2p);
	 \draw[selected edge] (b1) -- (b1p);
	 \draw[selected edge] (b2) -- (b2p);

	 \draw[unselected edge] (a1) -- (a2p) [dashed];
	 \draw[unselected edge] (a2) -- (a1p) [dashed];
	 \draw[unselected edge] (b1) -- (b2p) [dashed];
	 \draw[unselected edge] (b2) -- (b1p) [dashed];

     \draw[diagonal edge] (a1p) -- (b1p) [dotted];
     \draw[diagonal edge] (a2p) -- (b1p) [dotted];
     \draw[diagonal edge] (a2p) -- (b2p) [dotted];
     \draw[diagonal edge] (b2p) to [out=-90,in=0] (a1p) [dotted];

\end{tikzpicture}
\caption{An example of $G$ for two persistence diagrams with 2 off-diagonal points each. Skew edges are dashed gray, edges connecting diagonal points are dotted black.}
\label{fig:matching-graph}
\end{figure}
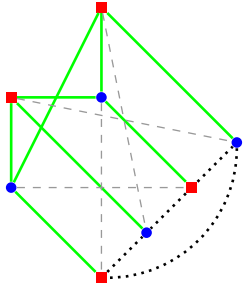

\begin{equation}
    \label{eq:c_q}
c(u,v)=\begin{cases} \|u-v\|_\infty & \text{if}~u\in X_0 ~\text{or}~ v\in Y_0\\ 0 & \text{otherwise}\end{cases}.
\end{equation}
Points from $U$ and $V$ are depicted as squares and circles, respectively, in Figure~\ref{fig:matching-graph} on the left; all the diagonal
points are connected by edges of weight 0 (plotted as dotted lines).
The following result is stated as the \emph{Reduction lemma} in~\cite[Ch. VIII.4]{eh-computational}:
\begin{lemma}
\label{lem:reduction_lemma}
~
\begin{itemize}
\item $W_\infty(X,Y)$ equals the bottleneck cost of $G$.
\item $W_q(X,Y)$ equals the $q$-Wasserstein cost of $G$.
This is equal to the $q$-th root of the $1$-Wasserstein cost of $G^q$,
which is the graph $G$ with cost function $c^q$, raising all edge costs
to the $q$-th power.
\end{itemize}
\end{lemma}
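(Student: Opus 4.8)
The plan is to prove both statements by establishing, for each of the two cost notions separately, one inequality in each direction between the cost of $G$ and the corresponding persistence distance; the reformulation in terms of $G^q$ then follows by applying Proposition~\ref{prop:q_to_1_reduction} to $G$ with weight function $c$. Throughout I regard $X_0,Y_0,X_0',Y_0'$ as multisets, so that $a:=|X_0|=|X_0'|$, $b:=|Y_0|=|Y_0'|$, and $|U|=|V|=a+b$. I will use repeatedly the fact recorded in the text that for an off-diagonal point $u$ the projection $u'$ is a closest diagonal point to $u$, i.e.\ $\|u-u'\|_\infty\le\|u-p\|_\infty$ for every $p$ on the diagonal.

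\emph{From diagrams to $G$.} Given any bijection $\eta\colon X\to Y$, I construct a perfect matching $M$ of $G$ whose cost is no larger than that of $\eta$. Partition $X_0$ into the points $A_1$ that $\eta$ maps into $Y_0$ and the points $A_2$ that $\eta$ maps into the diagonal, and set $k=|A_1|$; then $k$ points of $Y_0$ lie in $\eta(A_1)$ and the remaining $b-k$ receive their $\eta$-preimages from the diagonal of $X$. Let $M$ consist of: the edge $(x,\eta(x))\in X_0\times Y_0$ for $x\in A_1$; the edge $(x,x')\in X_0\times X_0'$ for $x\in A_2$, of cost $\|x-x'\|_\infty\le\|x-\eta(x)\|_\infty$; the edge $(y',y)\in Y_0'\times Y_0$ for each of the $b-k$ points $y$ matched from the diagonal, of cost $\|y'-y\|_\infty\le\|\eta^{-1}(y)-y\|_\infty$; and the $k$ remaining vertices of $Y_0'$ paired arbitrarily with the $k$ remaining vertices of $X_0'$ by cost-$0$ edges. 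A count gives $k+(a-k)+(b-k)+k=a+b$ edges, so $M$ is perfect; every edge cost is bounded by $\|x-\eta(x)\|_\infty$ for a distinct $x\in X$, while the cost-$0$ edges contribute nothing. Hence the bottleneck cost of $M$ is at most $\sup_{x\in X}\|x-\eta(x)\|_\infty$ and $\sum_{e\in M}c(e)^q\le\sum_{x\in X}\|x-\eta(x)\|_\infty^q$, and taking the infimum over $\eta$ bounds the cost of $G$ by the corresponding distance.

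\emph{From $G$ to diagrams.} Conversely, from any perfect matching $M$ of $G$ I read off a bijection $\eta\colon X\to Y$ of exactly the same cost. For $(x,y)\in M$ with $x\in X_0,y\in Y_0$ put $\eta(x)=y$; for $(x,z')\in M$ with $z'\in X_0'$ put $\eta(x)=z'$, using one copy of that diagonal point inside $Y$; symmetrically, for $(w',y)\in M$ with $w'\in Y_0'$ put $\eta(w')=y$, using a copy of the diagonal point $w'$ inside $X$; the cost-$0$ edges of $M$ join already-consumed diagonal slots and require no action. Since only finitely many diagonal points of each diagram are consumed and the diagonal carries infinite multiplicity, the unused part of the diagonal of $X$ maps bijectively, value by value, onto the unused part of the diagonal of $Y$ at cost $0$. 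The resulting $\eta$ is a bijection whose bottleneck cost equals $\max_{e\in M}c(e)$ and whose $q$-th power sum equals $\sum_{e\in M}c(e)^q$, since $c(u,v)=\|u-v\|_\infty$ precisely on the edges that contribute while the remaining edges and the diagonal tail contribute $0$. Choosing $M$ optimal shows the corresponding distance is at most the cost of $G$ (and that the infimum defining the distance is attained). Combining the two directions yields $W_\infty(X,Y)$ equal to the bottleneck cost of $G$ and $W_q(X,Y)$ equal to the $q$-Wasserstein cost of $G$, and the last assertion then follows from Proposition~\ref{prop:q_to_1_reduction} applied to $G$.

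The hard part will be the bookkeeping for the diagonal: one must verify that compressing the infinite diagonal into the finite multisets $X_0'$ and $Y_0'$ in $G$ loses nothing, which is exactly where the optimality of $u'$ among diagonal partners of $u$ and the infinite multiplicity of the diagonal enter, and that the vertex counts in $G$ balance, so that once the off-diagonal matches are fixed the leftover projection vertices of $U$ and $V$ are equinumerous and can always be paired among themselves.
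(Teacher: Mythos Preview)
The paper does not supply its own proof of this lemma; it simply records the statement and attributes it to \cite[Ch.~VIII.4]{eh-computational}. Your argument is correct and fills in exactly the details that the cited reference contains: one direction replaces each off-diagonal--to--diagonal match of a bijection $\eta$ by the shorter match to the orthogonal projection and pairs the leftover projection vertices at zero cost, while the other direction reads a bijection off a perfect matching $M$, sending the finitely many projection vertices that $M$ uses back into the infinite-multiplicity diagonal and completing by the identity there. The counting that the leftover vertices of $Y_0'$ and $X_0'$ are both of size $k$ is the only place one can slip, and you handle it correctly. The final sentence invoking Proposition~\ref{prop:q_to_1_reduction} is the right way to get the $G^q$ reformulation.

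One stylistic remark: your closing paragraph (``The hard part will be the bookkeeping\ldots'') reads as a planning note rather than proof content; everything it promises is already carried out in the two preceding paragraphs, so you can safely delete it.
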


Note that $G$ is almost geometric: distances between vertices
are measured using the $L_\infty$-metric, except that points on the diagonal can be matched for free to
each other if they are not matched with off-diagonal points.
Can this almost-geometric structure speed up computation?
This question motivates our work.

It is possible to simplify the above construction.
We call an edge $uv\in U\times V$ a \emph{skew edge} if $u\in X_0$, $v\in X_0'$ and
$v$ is not the projection of $u$, or if $v\in Y_0$, $u\in Y_0'$ and
$u$ is not the projection of $v$
(skew edges are shown with dashed lines in Figure~\ref{fig:matching-graph}).
\begin{lemma}
\label{lem:skew_lemma}
For both bottleneck and Wasserstein distance, there exists an optimal matching in $(G^q,c^q)$
that does not contain any skew edge.
\end{lemma}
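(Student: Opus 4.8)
The plan is to start from an arbitrary optimal matching of $(G^q,c^q)$ and repeatedly apply local exchanges that never increase the cost and strictly decrease the number of skew edges. Since $G^q$ is a finite complete bipartite graph with $|U|=|V|$, optimal matchings exist, and the number of skew edges is a nonnegative integer, this terminates at an optimal matching with no skew edge. Concretely, among all matchings attaining the optimal cost (the bottleneck cost of $(G^q,c^q)$, or its $1$-Wasserstein cost), fix one, $M=\eta$, minimizing the number of skew edges, and assume this number is positive. The problem has a symmetry swapping $X\leftrightarrow Y$, hence $U\leftrightarrow V$, $X_0\leftrightarrow Y_0$, $X_0'\leftrightarrow Y_0'$, leaving $c$ and $c^q$ unchanged and interchanging the two kinds of skew edges; since it preserves the cost and the skew count, I may assume $M$ contains a skew edge of the first kind, i.e.\ $\eta(u)=v'$ with $u,v\in X_0$ and $v'\neq u'$, where $w'$ denotes the diagonal projection of $w\in X_0$.

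To organize such edges I would build a directed graph $H$ on vertex set $X_0$ with an arc $u\to v$ whenever $\eta(u)=v'$. As $w\mapsto w'$ is injective, $\eta(u)$ determines $v$, so every vertex of $H$ has out-degree at most one; as $\eta$ is injective, every vertex has in-degree at most one. Hence $H$ is a disjoint union of simple directed paths and cycles. A loop $u\to u$ is just a non-skew edge $\eta(u)=u'$, and every first-kind skew edge of $M$ is a non-loop arc; such an arc lies on a directed cycle or on a maximal directed path of $H$.

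In the cycle case, say the arc lies on $u_1\to u_2\to\dots\to u_k\to u_1$ with $k\ge 2$: then $\eta(u_i)=u_{i+1}'$ cyclically, the $u_i$ are pairwise distinct, and so are the $u_i'$ (projection is injective). Redefine $\eta'(u_i):=u_i'$ for all $i$; this keeps $\eta$ a bijection $U\to V$, as it only permutes the same image set $\{u_1',\dots,u_k'\}$ among the same preimages. Since $u_i'$ is the closest diagonal point to $u_i$, $c(u_i,u_i')=\|u_i-u_i'\|_\infty\le\|u_i-u_{i+1}'\|_\infty=c(u_i,u_{i+1}')$, hence also $c^q(u_i,u_i')\le c^q(u_i,u_{i+1}')$; taking the maximum (bottleneck) or sum ($q$-Wasserstein) over $i$ shows the cost does not grow, while $k\ge 2$ skew edges vanish and none appears. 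In the path case, say the arc lies on a maximal path $u_1\to\dots\to u_k$ with $k\ge 2$: maximality forces the out-neighbour of $u_k$ and the in-neighbour of $u_1$ outside $X_0$, and since $V=Y_0\sqcup X_0'$ and $U=X_0\sqcup Y_0'$ this gives $\eta(u_k)=w\in Y_0$ and $z:=\eta^{-1}(u_1')\in Y_0'$, a diagonal vertex. Redefine $\eta'(u_i):=u_i'$ for $i<k$, keep $\eta'(u_k):=w$, and set $\eta'(z):=u_k'$ (a bijection again, as the $u_1',\dots,u_k'$ are distinct and $z\notin X_0$). The only edge-cost changes are: $\|u_i-u_{i+1}'\|_\infty$ becomes $\|u_i-u_i'\|_\infty$ for $i<k$ (a weak decrease, as above), and $c(z,u_1')=0$ becomes $c(z,u_k')=0$ (both diagonal--diagonal). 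So again the cost does not grow, the $k-1\ge 1$ first-kind skew edges $u_i\to u_{i+1}'$ disappear, no new first-kind skew edge is created (the $\eta'(u_i)=u_i'$ are non-skew, and $\eta'(z)=u_k'$ joins two diagonal points), and no second-kind skew edge is affected. In both cases we get an optimal matching with strictly fewer skew edges, contradicting the choice of $M$.

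The heart of the argument is the path case, and it is exactly why one follows a skew edge to the end of a \emph{maximal} path: a careless swap at one skew edge $\eta(u)=v'$ must re-route whatever is matched to $u'$, and that vertex could a priori be an off-diagonal point, with no cheap substitute edge. Maximality forces that vertex to be diagonal, hence rematchable at cost $0$ to the diagonal vertex freed at the other end of the path --- this is where the ``almost geometric'' feature (diagonal points matchable to one another for free) is used. The remaining points --- that each modified $\eta'$ is still a bijection $U\to V$ (the relevant projected vertices being pairwise distinct), and that $t\mapsto t^q$ is nondecreasing on $\R_+$, so all the weak inequalities pass to $c^q$ --- are routine.
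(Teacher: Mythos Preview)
Your proof is correct, but it takes a more elaborate route than the paper. The paper dispenses with the minimal-counterexample setup and the path/cycle decomposition entirely: starting from \emph{any} perfect matching $M$, it builds in one shot a new matching $M'$ by (i) keeping every edge $uv\in M$ with $u\in X_0$, $v\in Y_0$ and adding the diagonal edge $u'v'$; (ii) replacing every skew edge $ab'$ (with $a$ off-diagonal) by $aa'$; (iii) keeping every non-skew projection edge $aa'$. One then checks that $M'$ is a perfect matching with no skew edges and cost no larger than $M$. Your functional-graph analysis on $X_0$ effectively rediscovers this global rearrangement one component at a time: the cycle case corresponds to a set of $u_i\in X_0$ all matched to diagonal points, which the paper's rule sends to their own projections; the path case corresponds to a chain terminating at some $u_k$ matched into $Y_0$, where the paper's rule pairs $u_k'$ with $w'$ rather than with your $z$, but both are zero-cost diagonal--diagonal edges. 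The paper's argument is shorter; yours makes the bijectivity and cost bookkeeping more explicit and isolates precisely where the ``diagonal points are free'' feature is used.
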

\begin{proof}
Fix an arbitrary matching $M$ and define the matching $M'$ as follows:
For any $uv\in M\cap X_0\times Y_0$, add $uv$ and $u'v'$ to $M'$. For any skew edge $ab'$ of $M$ with $a$ the off-diagonal point (either in $X_0$ or $Y_0$),
add $aa'$ to $M'$. Also add to $M'$ all edges of $M$ of the form $aa'$, where $a$ is an off-diagonal
point. It is easy to see that $M'$ is a perfect matching without skew edges, and its cost
is not worse than the cost of $M$:
indeed, the skew edge $ab'$ got replaced by $aa'$ which is not larger, and the
vertices on the diagonal possibly got rearranged, which has no effect on the cost.
\end{proof}
Lemma~\ref{lem:skew_lemma} implies that removing all skew pairs does not affect
the result of the algorithm, saving roughly a factor of two in the size of the graph.%
\footnote{\libraryname{Dionysus} uses the same simplification.}

We prove another equivalent characterization of the optimal cost
which will be useful in Section~\ref{sec:weighted_wasserstein}:
The previous lemma showed that, conceptually, increasing the weight
of each skew edge to $\infty$ does not affect the cost of an optimal
matching.
We show now that even decreasing the weight of a skew edge $ab'$
to the weight of $aa'$ has no effect on the optimal cost.
Formally, let us define $\tilde{G}=(U\sqcup V,U\times V,\tilde{c})$ 
with a new weight function $\tilde{c}$ as follows:
\begin{equation}
    \label{eq:c_q_tilde}
    \tilde{c}(u,v)=\begin{cases} \|u-v\|_\infty & \text{if}~u\in X_0 ~\text{and}~ v\in Y_0\\
    \|u - u'\|_{\infty} & \text{if}~u\in X_0 ~\text{and}~ v\in X_0'\\
    \|v - v'\|_{\infty} & \text{if}~u\in Y_0' ~\text{and}~ v\in Y_0\\
    0 & \text{otherwise}\end{cases}.
\end{equation}

\begin{lemma}
\label{lem:skew_lemma_tilde}
For both bottleneck and Wasserstein distance, there exists an optimal matching in $\tilde{G}$
that does not contain any skew edge.
\end{lemma}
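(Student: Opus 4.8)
The plan is to reuse, essentially verbatim, the rewriting used in the proof of Lemma~\ref{lem:skew_lemma}. That construction never inspects the edge weights, only the combinatorial types of the edges of a matching, and the underlying complete bipartite graph on $U\sqcup V$ is the same for $\tilde G$ as for $G$. So I would start from an arbitrary matching $M$ of $\tilde G$ and build $M'$ by the same rules: keep every edge $aa'\in M$ with $a$ off-diagonal; replace every skew edge $ab'\in M$ (with $a$ the off-diagonal endpoint) by the projection edge $aa'$; for every off-diagonal--off-diagonal edge $uv\in M$ keep $uv$ and additionally insert $u'v'$; and pair up any remaining diagonal vertices on the two sides arbitrarily. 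The fact that $M'$ is a well-defined perfect matching without skew edges is inherited directly from Lemma~\ref{lem:skew_lemma}, since that part of the argument does not depend on the weight function at all.

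The only thing that genuinely has to be re-examined is the cost comparison, now measured with $\tilde c$ instead of the original $c$; here $\tilde c$ and $c$ differ precisely on the skew edges. The key point is that~\eqref{eq:c_q_tilde} is set up exactly so that the replacement step is weight-preserving: a skew edge $ab'$ with $a\in X_0$, $b'\in X_0'$ has $\tilde c(a,b')=\|a-a'\|_\infty$ by the second branch of~\eqref{eq:c_q_tilde}, which equals $\tilde c(a,a')$, and a skew edge with off-diagonal endpoint $v\in Y_0$ has weight $\|v-v'\|_\infty$ by the third branch, again equal to $\tilde c(v',v)$. The edges $u'v'$ that are inserted, and any arbitrary pairing of leftover diagonal vertices, join two diagonal points and hence fall under the ``otherwise'' branch with weight $0$, while the surviving off-diagonal--off-diagonal and projection edges keep their weights. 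Consequently the multiset of $\tilde c$-weights of $M'$ coincides with that of $M$ (up to the zero weights contributed by the inserted diagonal edges, which replace the zero weights of the diagonal--diagonal edges of $M$), so in particular the bottleneck cost and the $q$-Wasserstein cost of $M'$ do not exceed those of $M$. Applying this to an optimal matching of $\tilde G$ yields an optimal matching without skew edges, which is the assertion; running the same argument with $\tilde c^q$ in place of $\tilde c$ handles the reduction to the $1$-Wasserstein case.

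I do not expect a real obstacle. Once the construction of Lemma~\ref{lem:skew_lemma} is imported, the proof amounts to the observation $\tilde c(a,b')=\tilde c(a,a')$ for every skew edge. The only point needing a little care is matching the combinatorial edge types produced by the rewriting against the four branches of~\eqref{eq:c_q_tilde} --- confirming that the replacement projection edge $aa'$ lies in the same branch that gave the skew edge $ab'$ its weight, and that each inserted $u'v'$ and each leftover diagonal edge lands in the ``otherwise'' branch --- and these are routine case checks.
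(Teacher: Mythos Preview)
Your proposal is correct and follows exactly the paper's approach: the paper simply states that the proof of Lemma~\ref{lem:skew_lemma} carries over word by word, and you have spelled this out, additionally observing that under $\tilde c$ the replacement of a skew edge $ab'$ by $aa'$ is in fact weight-preserving (not merely non-increasing). Your elaboration of the case check against the branches of~\eqref{eq:c_q_tilde} is accurate and more explicit than the paper's one-line proof.
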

\begin{proof}
The proof of Lemma~\ref{lem:skew_lemma} carries over word by word.
\end{proof}

\begin{lemma}
\label{lem:diag_lemma}
The weighted graphs $G$ and $\tilde{G}$ 
have the same bottleneck and Wasserstein cost.
\end{lemma}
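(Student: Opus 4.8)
The plan is to compare the two weight functions edge by edge and then sandwich the optimal costs. First I would observe that $c$ and $\tilde{c}$ differ \emph{only} on skew edges: on any edge that is not skew, the case distinctions in~\eqref{eq:c_q} and~\eqref{eq:c_q_tilde} return the same value (both equal $\|u-v\|_\infty$ when $u\in X_0$, $v\in Y_0$, and both equal $0$ otherwise; for the edge $uu'$ with $u\in X_0$, $\tilde c$ gives $\|u-u'\|_\infty$, which is what $c$ also gives). On a skew edge $uv$ with $u\in X_0$ and $v\in X_0'$ not the projection of $u$, we have $c(u,v)=\|u-v\|_\infty$ while $\tilde c(u,v)=\|u-u'\|_\infty$; since $u'$ is the closest diagonal point to $u$, $\tilde c(u,v)\le c(u,v)$, and the case $u\in Y_0'$, $v\in Y_0$ is symmetric. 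Hence $\tilde c\le c$ pointwise, with equality on all non-skew edges.

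Given this, the inequality "cost of $\tilde G$ $\le$ cost of $G$" is immediate for both notions of cost, because $\tilde c\le c$ edgewise implies $\max_{e\in M}\tilde c(e)\le\max_{e\in M}c(e)$ and $\sum_{e\in M}\tilde c(e)^q\le\sum_{e\in M}c(e)^q$ for every perfect matching $M$, and taking the minimum over all perfect matchings preserves the inequality. For the reverse direction I would use Lemma~\ref{lem:skew_lemma_tilde} to pick an optimal matching $M$ of $\tilde G$ that is skew-free; by the pointwise comparison, $\tilde c(e)=c(e)$ for every $e\in M$, so the cost of $M$ measured in $G$ equals its cost in $\tilde G$, which is optimal. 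Thus the optimal cost of $G$ is at most the cost of $M$ in $G$, which equals the optimal cost of $\tilde G$. (Equivalently and symmetrically: Lemmas~\ref{lem:skew_lemma} and~\ref{lem:skew_lemma_tilde} say both $G$ and $\tilde G$ admit skew-free optimal matchings, so each optimal cost equals the optimum restricted to skew-free matchings, and on skew-free matchings the two cost functions coincide.)

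The same argument applies verbatim with $c^q$ in place of $c$ and $\tilde c^q$ in place of $\tilde c$, since raising to the $q$-th power preserves $\tilde c\le c$ and preserves equality on non-skew edges; so $G^q$ and $\tilde G^q$ have the same $1$-Wasserstein cost, and together with Lemma~\ref{lem:reduction_lemma} (or Proposition~\ref{prop:q_to_1_reduction}) this gives equality of the $q$-Wasserstein costs of $G$ and $\tilde G$ as well. I do not expect a genuine obstacle here; the only points requiring care are checking that the four-way case analysis in~\eqref{eq:c_q_tilde} really reduces to the two-way one in~\eqref{eq:c_q} on non-skew edges, and confirming that the skew-edge inequality points in the right direction --- which is exactly the statement that $u'$ minimizes the $L_\infty$ distance from $u$ to the diagonal --- and phrasing the argument once so that it covers the bottleneck case, the Wasserstein case, and the passage through $G^q$ uniformly.
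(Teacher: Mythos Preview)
Your proposal is correct and follows essentially the same approach as the paper: the edgewise inequality $\tilde{c}\le c$ gives $\tilde{C}\le C$, and Lemma~\ref{lem:skew_lemma_tilde} supplies a skew-free optimal matching for $\tilde{G}$ on which $c$ and $\tilde{c}$ agree, yielding $C\le\tilde{C}$. The paper's proof is just the terse version of what you wrote; your additional verification that $\tilde{c}\le c$ holds on skew edges and your remarks on the $q$-th power case are correct elaborations but not new ideas.
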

\begin{proof}
    Let $C$ be the cost for $G$, and $\tilde{C}$ be the cost for
    $\tilde{G}$ with respect to bottleneck or Wasserstein distance.
    Since $\tilde{c}\leq c$ edge-wise, $\tilde{C}\leq C$ is immediate.
    For the opposite direction, fix a matching $\tilde{M}$ that realizes $\tilde{C}$
    and has no skew edge (such a matching exists by Lemma~\ref{lem:skew_lemma_tilde}).
    By the absence of skew edges, the cost $\tilde{M}$ is the same if the cost function
    $\tilde{c}$ is replaced by $c$. This implies $C\leq\tilde{C}$.
\end{proof}

\paragraph{K-d trees.}
K-d trees~\cite{kd-tree} are a classical data structure for near-neighbor search
in Euclidean spaces. The input point set is split into two halves at the median value
of the first coordinates. The process is repeated recursively on the two halves,
cycling through the coordinates used for splitting. Each node of the resulting
tree corresponds to a bounding box of the points in its subtree. The boxes
at any given level are balanced to have roughly the same number of points. Given a query point
$q$, one can find its nearest neighbor (or all neighbors within a given radius)
by traversing the tree. A subtree can be eliminated from the search if the bounding
box of its root node lies farther from the query point than the current
candidate for the nearest neighbor (or the query radius). Although the worst case query performance is
$O(\sqrt{n})$ in the planar case, k-d trees perform well in practice
and are easy to implement. In Section~\ref{sec:bottleneck}
we use the \libraryname{ANN}~\cite{ANN} implementation of k-d trees, changing it
to support the deletion of points. For Section~\ref{sec:wasserstein} we implemented our own
version of k-d trees to support the search for a nearest neighbor with weights.
\paragraph{Experimental setup.}
All experiments in the paper were performed on a server running Debian wheezy, with 32 Intel
Xeon cores clocked at 2.7GHz, with 264 GB of RAM. Only one core was used per
instance in all our experiments.

\begin{figure}[t!]
    \centering
    \begin{subfigure}[t]{0.4\textwidth}
        \centering
        \includegraphics[width=4cm]{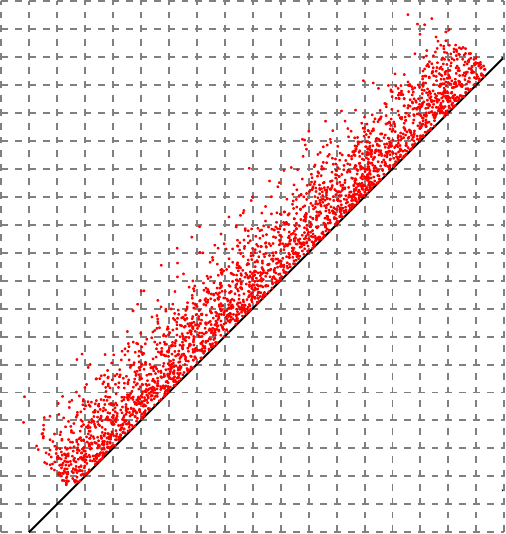}
        \caption{Example of a \dtype{normal} diagram.}
    \end{subfigure}%
    ~
    \begin{subfigure}[t]{0.4\textwidth}
        \centering
        \includegraphics[width=4cm]{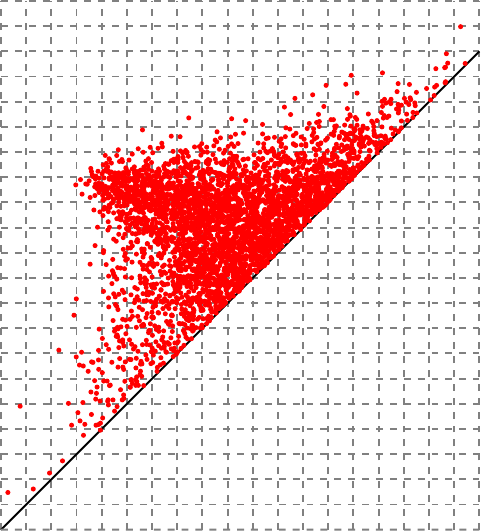}
        \caption{Example of a \dtype{real} diagram.}
    \end{subfigure}
    \caption{Examples of persistence diagrams.}
    \label{fig:example-persistence-diagrams}
\end{figure}

We experimentally compare the performance both on artificially generated diagrams
as well as on realistic diagrams obtained from point cloud data.
For brevity, we restrict the presentation to two classes of instances.
In the first class, we generate pairs of diagrams, each consisting of $n$ points.
The points are  of the form $(a-|b|/2,a+|b|/2)$
where $a$ is drawn uniformly in an interval $[0,s]$,
and $b$ is chosen from a normal distribution $N(0,s)$, with $s=100$.
In this way, the persistence of a point, $|b|$, is normally distributed,
so the point set tends to concentrate near the diagonal.
This matches the behavior of persistence diagrams of realistic data sets, where
points with high persistence are sparse, while the noise present in the data
generates the majority of the points, with small persistence.
For every set of parameters, we generate $10$ independent pairs of diagrams.
We refer to this class of experiments as \dtype{normal} instances (Figure~\ref{fig:example-persistence-diagrams}(a)).

To get a diagram of the second class, we sample a point set $P$ of $n$ points uniformly at random
from either a 4-, or a 9-dimensional unit sphere.
The 1-dimensional persistence diagram of the Vietoris--Rips filtration
of $P$ serves as our input.
We use the \libraryname{Dipha} library\footnote{\url{https://github.com/DIPHA/dipha}} for the generation
of these instances. Note that persistence diagrams generated in this way
have different numbers of points.
We refer to this class of experiments as \dtype{real} instances (Figure~\ref{fig:example-persistence-diagrams}(b)).
For each set of parameters (sphere dimension and number of points sampled),
we have generated $6$ test instances and
computed pairwise distances between all ${6\choose 2} = 15$ pairs.

Our plots show the average running times and the standard deviation as error bars.
For the \dtype{real} class, the $x$-axis is labelled with the number of points
sampled from the sphere, not with the size of the diagram.
The size of the persistence diagrams, however, depends
practically linearly on the number of sample points, with a constant factor
that grows with dimension:
the largest instance for dimension $9$ is a diagram with $5762$ points,
while for dimension $4$ the largest diagram is of size $1679$.

Our experiments cover many other cases.
We have tested various choices of $s$, the scaling parameter in the \dtype{normal} class,
and of the sphere dimension in the \dtype{real} class.
We have also tried different ways of generating diagrams, for instance, by choosing $n$ points
uniformly at random in the square  $[0,s]\times[0,s]$, above the diagonal.
In all these cases, we encountered the same qualitative difference between the tested algorithms
as for the two representative cases discussed in this paper.

\section{Bottleneck matchings}
\label{sec:bottleneck}
Our approach follows closely the work of Efrat et al.~\cite{eik-geometry},
based on the following simple observation.
Let $G[r]$ be the subgraph of $G$ that contains the edges with weight at most $r$.
The bottleneck distance of $G$ is the minimal value $r$ such that $G[r]$
contains a perfect matching.
Since the bottleneck cost for $G$ must be equal to the weight of one of the edges,
we can find it exactly by combining a test for a perfect
matching with a binary search on the edge weights.

\paragraph{The algorithm by Hopcroft and Karp.}
 Efrat et al.~modify the algorithm by Hopcroft and Karp~\cite{hk-algorithm}
to find a maximum matching. We briefly summarize the Hopcroft--Karp algorithm;
\cite{eik-geometry} provides an extended review.
For a given graph $G[r]$, the algorithm computes a maximum matching, i.e.,
a matching of maximal cardinality.
$G[r]$, with $2n$ vertices, has a perfect matching if and only if its maximum
matching has $n$ edges.

The algorithm maintains an initially empty matching $M$ and looks for an \emph{augmenting path},
i.e., a path in $G[r]$ that alternates between edges inside and outside of $M$,
with the first and the last edge not in $M$.
Switching the state of all edges in an augmenting path (inserting or removing them
from $M$) \emph{augments} the matching, increasing its size by one.

The algorithm detects several vertex-disjoint augmenting paths at once.
It computes a \emph{layer subgraph} of $G[r]$,
from which it reads off the vertex-disjoint augmenting paths.
Both the construction of the layer subgraph and the search for augmenting paths
are realized through a graph traversal in $G[r]$ in $O(m)$ time, where $m$ is the number of edges.
Having identified augmenting paths, the algorithm augments the matching and
starts over, repeating the search until all vertices are matched or no
augmenting path can be found.
As shown in~\cite{hk-algorithm}, the algorithm terminates after $O(\sqrt{n})$ rounds,
yielding a running time of $O(m\sqrt{n})=O(n^{2.5})$.

\paragraph{Geometry helps.}
The crucial observation of Efrat et al.~is that for a geometric graph $G[r]$,
the layer subgraph does not have to be constructed explicitly.
Instead one may use a near-neighbor search data structure,
denoted by $\nnd_r(S)$, which stores a point set $S$ and a radius $r$.
It must answer queries of the form: given a point $q\in\R^2$, return a point
$s \in S$ such that $d(q,s)\leq r$. $\nnd_r(S)$ must support deletions of points in $S$.
As the authors show, if $T(|S|)$ is an upper bound for the cost of one operation in $\nnd_r(S)$,
the algorithm by Hopcroft and Karp runs in $O(n^{1.5}T(n))$ time for a graph with $2n$ vertices.
For the planar case, Efrat et al.~show that one can construct such a data structure
(for any $L_p$-metric) in $O(n\log n)$ preprocessing time, with $T(n)=O(\log n)$
time per operation.
Thus, the execution of the Hopcroft--Karp algorithm costs only $O(n^{1.5}\log n)$.

Naively sorting the edge weights and binary searching for the value of $r$ takes $O(n^2\log n)$ time.
But this running time would dominate the improved
Hopcroft--Karp algorithm. In order to improve the complexity of the edge search,
the authors use an approach, attributed to Chew and Kedem~\cite{ck-improvements},
for efficient $k$-th distance selection for a bi-chromatic point set under the $L_\infty$-distance;
see \cite[Sec.6.2.2]{eik-geometry} for details.

With this technique,
the computation of a maximum matching dominates the cost of finding the $k$-th
largest distance, giving the runtime complexity of $O(n^{1.5}\log^2 n)$ for computing the bottleneck matching.
Using further optimizations~\cite[Sec.5.3]{eik-geometry},
they obtain a running time of $O(n^{1.5}\log n)$ for geometric graphs in $\R^2$
with the $L_\infty$-metric.

It is not hard to see that the analysis carries over to the case of persistence
diagrams (also mentioned in~\cite[p.196]{eh-computational}).
Let $G_1=(U\sqcup V,U\times V)$ be the graph defined in Lemma~\ref{lem:reduction_lemma}.
In the algorithm, $\nnd_r(S)$ is initialized with the points in $V$,
which are subsequently removed from it.
We additionally maintain a set $S'$ of diagonal points contained in $S$. When
the algorithm queries a near neighbor of a diagonal point of $U$, we return one
of the diagonal points from $S'$ in constant time, if $S'$ is not empty.  The
overhead of maintaining $S'$ is negligible. We summarize:

\begin{theorem}
The bottleneck distance of two persistence diagrams can be computed in $O(n^{1.5}\log n)$.
\end{theorem}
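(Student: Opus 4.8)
The plan is to assemble the pieces already in place rather than start from scratch. By the Reduction lemma (Lemma~\ref{lem:reduction_lemma}), $W_\infty(X,Y)$ equals the bottleneck cost of the graph $G_1=(U\sqcup V,U\times V)$ on $2n$ vertices, and by Lemma~\ref{lem:skew_lemma} we may delete all skew edges without changing the optimum. So it suffices to show that the bottleneck cost of (the reduced) $G_1$ can be computed in $O(n^{1.5}\log n)$ time, in exact analogy with the Efrat et al.~analysis for genuinely geometric graphs. First I would recall the standard scheme: the bottleneck cost is attained at one of the $O(n^2)$ edge weights, hence equals the smallest $r$ for which $G_1[r]$ admits a perfect matching, so the problem reduces to a binary search over candidate thresholds, each decided by one run of Hopcroft--Karp. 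The point of the geometric version is that the layer subgraph of $G_1[r]$ is never materialized: the BFS and DFS phases only repeatedly ask for a vertex of $V$ within distance $r$ of a query point of $U$, and those queries are answered by the deletion-supporting near-neighbor structure $\nnd_r(V)$, which for the planar $L_\infty$ metric has $O(n\log n)$ preprocessing and $T(n)=O(\log n)$ per operation; one maximum-matching computation then costs $O(n^{1.5}T(n))=O(n^{1.5}\log n)$.

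The step I would treat most carefully — and the one that is genuinely new compared to the purely geometric setting — is that $G_1$ is only \emph{almost} geometric: besides the $L_\infty$-edges between off-diagonal points it contains $\Theta(n^2)$ zero-weight edges among the diagonal vertices, so the graph is dense and a near-neighbor structure built on the embedded points of $V$ does not by itself answer the queries that originate from diagonal vertices of $U$. I would handle this exactly as indicated in the text: maintain, alongside $\nnd_r$, an auxiliary bag $S'$ holding the diagonal points of $V$ currently present; when Hopcroft--Karp queries a near neighbor of a diagonal vertex of $U$, return an arbitrary element of $S'$ in $O(1)$ time if $S'\neq\emptyset$, and otherwise fall through to $\nnd_r$ for an off-diagonal match; deletions update $S'$ in $O(1)$. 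Then one verifies that this hybrid oracle returns, for each query, a neighbor at distance $\leq r$ exactly when one exists in $G_1[r]$, so the maximum matching is computed correctly and the per-operation cost and the overall $O(n^{1.5}\log n)$ bound for the matching phase are preserved, with only negligible overhead for $S'$.

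Finally, to keep the threshold search from dominating, I would avoid sorting the $\Theta(n^2)$ weights explicitly and instead invoke the $k$-th bichromatic $L_\infty$-distance selection of Chew and Kedem, which together with the optimizations of Efrat et al.~(Sec.~5.3 of~\cite{eik-geometry}) brings the search cost to the same $O(n^{1.5}\log n)$ bound; summing over the $O(\log n)$ matching tests yields the claimed running time. The main obstacle I anticipate is precisely the bookkeeping for the diagonal vertices: making sure that the auxiliary set $S'$ is consulted at exactly the right moments in the layered BFS and in the augmenting-path DFS so that the equivalence with a maximum matching in $G_1[r]$ is exact — everything else is a direct transcription of the geometric argument of Efrat et al.
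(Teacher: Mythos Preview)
Your proposal is correct and follows essentially the same route as the paper: reduce to $G_1$ via Lemma~\ref{lem:reduction_lemma}, run the geometric Hopcroft--Karp of Efrat et al.\ with the $\nnd_r$ oracle, and patch the non-geometric diagonal--diagonal edges by the auxiliary bag $S'$ --- this is exactly the trick the paper uses. One small slip: your final sentence ``summing over the $O(\log n)$ matching tests'' would give $O(n^{1.5}\log^2 n)$, not $O(n^{1.5}\log n)$; the extra log is removed precisely by the Sec.~5.3 optimizations of~\cite{eik-geometry} that you already cite, so just drop that clause.
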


\paragraph{Our approach.}
Our implementation follows the basic structure of Efrat et al., reducing the construction
of layered subgraphs to operations on a near-neighbor data-structure $\nnd_r(S)$.
But instead of the rather involved data structure proposed by the authors,
we use a simpler alternative: we construct a k-d tree for $S$.
When searching for a point at most $r$ away from a query point $q$, we traverse
the k-d tree, pruning from the search the subtrees whose enclosing box is
further away from the query than the current best candidate.
When a point is removed from $S$, we mark it as removed in the k-d tree;
in particular, we do not rebalance the tree after a removal.
We also keep track of how many points remain in each subtree,
so that we can prune empty subtrees from the subsequent searches.
The running time per search query can be bounded by $O(\sqrt{n})$
per query, with $n$ the number of points originally stored in the search tree.
We remark that using range trees~\cite{dutch}, the worst-case complexity
could be further reduced to $O(\log n)$.

Initial tests showed that the naive approach of precomputing and sorting all
distances for the binary search dominates the running time in practice.
Instead of implementing the asymptotically fast but complicated approach
of Efrat et al., we compute a $\delta$-approximation of the bottleneck
distance, which we can then post-process to compute the exact answer.
Let $d_{\max}$ denote the maximal $L_\infty$-distance between a point in $U$ and a point in $V$ in $G_1$.
First, we compute, in linear time, a $3$-approximation of $d_{\max}$ as follows.
We pick an arbitrary point in $U$, find its farthest point $v_0 \in V$,
and find a point $u_0 \in U$ farthest from $v_0$. Then, $\|u_0-v_0\|_\infty\leq d_{\max}\leq 3 \|u_0-v_0\|_\infty$
(from the triangle inequality).
Setting $t=3 \|u_0-v_0\|_\infty$, the exact bottleneck distance $o$ must be in $[0, t]$
and we perform a binary search on $[0, t]$ until we find an interval $(a,b]$ that
satisfies $(b-a)<\delta\cdot a$.  We return $b$ as the approximation.
It is easy to see that $b \in [o, (1+\delta)o)$.

At each iteration of the binary search, we reuse the maximum matching constructed before
(if the true distance is below the midpoint of the current interval $(a, b]$, we remove
edges whose weight is greater than $( a + b) / 2$, otherwise the whole matching can be kept).

\begin{figure}[t]
\centering
\includegraphics[width=3cm]{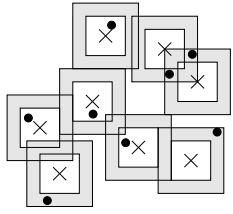}
\caption{Illustration of the exact computation step: the exact bottleneck distance
must be realized by a point in $B$ (circles) in an annulus around $A$ (crosses).
The width of the annulus is determined by the approximation quality.
In this example, there are 6 candidate pairs.}
\label{fig:certification}
\end{figure}

To get the exact answer, we find pairs in $U\times V$ whose distance is in
the approximation interval, $(a,b]$.
For such a pair $(u,v)$,
$v$ lies in an $L_\infty$-annulus around $u$ with inner radius $a$ and outer radius $b$.
So we find for every $u\in U$ the points of $V$ in the
corresponding annulus and take the union of all such pairs as the candidate set.
In the example in Figure \ref{fig:certification}, points in $U$ are drawn as crosses, points in $V$ as circles,
and there are 6 candidate pairs.

We compute the candidate pairs with similar techniques as used for range trees~\cite{dutch}.
Specifically, we identify all pairs $(u,v)$ whose $x$-coordinate difference lies
in $(a,b]$. We can compute the set $C_x$ of such pairs
in $O(n\log n + |C_x|)$ time by sorting $U$ and $V$ by $x$-coordinates.
For each pair $(u,v)$ in $C_x$,
we check in constant time whether $\|u-v\|_\infty\in (a,b]$ and remove the pair otherwise.
We then repeat the same procedure using the $y$-coordinates.
To compute the exact bottleneck distance, we perform binary search on the vector
of candidate distances.

Let $c$ denote the number of candidate pairs.
The complexity of our procedure is not output-sensitive in $c$
because $|C_x|+|C_y|$ can be larger than $c$ --- so too many pairs might be considered.
Nevertheless, we expect that when using a sufficiently good initial
approximation, both $|C_x|+|C_y|$ and $c$ are small, so our method will be
fast in practice.

\paragraph{Experiments.}
We compare the geometric and non-geometric bottleneck matching algorithms.
We set $\delta=0.01$ and compute the approximate bottleneck
distance to the relative precision of $\delta$, using k-d trees
for the geometric version and constructing the layered
graph combinatorially in the non-geometric version.
Figure~\ref{fig:bottleneck:geometric_non_geometric} shows the results for
\dtype{normal} and \dtype{real} instances.
We observe that the geometric version scales significantly better, and runs
faster by a factor of roughly $10$ for the largest displayed \dtype{normal}
instance with $25000$ points per diagram.
We remark that the memory consumption of the geometric and non-geometric versions
both scale linearly, and the geometric version is larger by a factor
of roughly $4$ throughout. For $25000$ points,
about 60MB of memory is required.

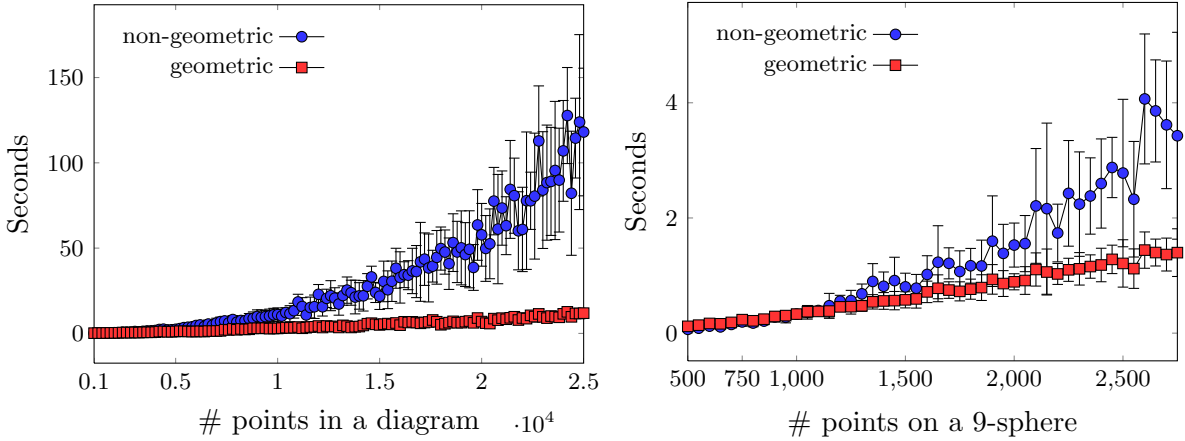
\begin{figure*}[t]
    \centering
    \parbox{0.49\textwidth}{
    \begin{tikzpicture}
    \begin{axis}[
                    xmin=1000,xmax=25000,
                    xtick={1000,5000,10000,15000,20000,25000},
                    height=2.5in,
                    width=0.49\textwidth,
                    xlabel={\# points in a diagram},
                    ylabel={Seconds},
                    legend pos=north west,
                ]
        \addaveraged[nongeommark]{test_nongeom_0.01_normal_distr_1000_1000.txt}
        \addlegendentry{non-geometric}
        \addaveraged[geommark]{log_test_normal_distr_1000_1000_bottleneck_final_0.01.txt}
        \addlegendentry{geometric}
    \end{axis}
    \end{tikzpicture}
    }
    \parbox{0.49\textwidth}{
    \begin{tikzpicture}
    \begin{axis}[
                    xmin=500,xmax=2750,
                    xtick={500,750,1000,1500,2000,2500},
                    height=2.5in,
                    width=0.49\textwidth,
                    xlabel={\# points on a 9-sphere},
                    ylabel={Seconds},
                    legend pos=north west,
                ]
        \addaveraged[nongeommark]{nongeom-approx-new-9.txt}
        \addlegendentry{non-geometric}

        \addaveraged[geommark]{geom-approx-new-dim-9.txt}
        \addlegendentry{geometric}
    \end{axis}
    \end{tikzpicture}
    }
    \caption{Running times of the bottleneck distance computation on
            \dtype{normal} data (left) and \dtype{real} data (right) for varying
            number of points.}
    \label{fig:bottleneck:geometric_non_geometric}
\end{figure*}

We used linear regression to fit curves of the form $cn^\alpha$
to the plots of Figure~\ref{fig:bottleneck:geometric_non_geometric} (left).
For the non-geometric version, the best fit appeared for $\alpha=2.3$,
roughly matching the asymptotic bound of Hopcroft--Karp.
For the geometric version, we get the best fit for $\alpha=1.4$;
this shows that despite the pessimistic worst-case complexity,
the algorithm tends to follow the improved geometric bound on practical instances.

The above experiment does not include the post-processing step of computing the
exact bottleneck distance. We test the geometric version above that yields a
$1\%$ approximation against the variant that also computes the exact distance
from the initial approximation, as explained earlier in this section. Our
experiments show that the running time of the post-processing step is about half
of the time needed to get the approximation.  Although there is some variance in
the ratio, it appears that the post-processing does not worsen the performance
by more than a factor of two.

Figure~\ref{fig:bottleneck:comp_with_dionysus} compares our
exact (geometric) bottleneck algorithm with \libraryname{Dionysus},
the only publicly available implementation for computing bottleneck distance between persistence
diagrams. \libraryname{Dionysus} simply sorts the edge distances in increasing order and performs a binary
search, building the graphs $G[r]$
and calling the Edmonds matching algorithm~\cite{edmonds-paths}
from the \libraryname{Boost} library to check for a perfect matching in $G[r]$.
Already for diagrams of $2800$ points, our speed-up exceeds a factor of $400$.

\begin{figure*}[t]
    \centering
    \parbox{0.49\textwidth}{
    \begin{tikzpicture}
    \begin{axis}[
                    ymode=log,
                    xmin=1000,xmax=2800,
                    ymax=1000,
                    xtick={1000,1500,2000,2800},
                    height=2.5in,
                    width=0.49\textwidth,
                    xlabel={\# points in a diagram},
                    ylabel={Seconds},
                    legend pos=north west,
                ]
        \addplot+[dionmark] table[x index=1, y index=15] {normal_distr_all_with_dion.csv};
        \addlegendentry{Dionysus}
        \addplot+[geommark] table[x index=1, y index=3] {normal_distr_all_with_dion.csv};
        \addlegendentry{geometric}
    \end{axis}
    \end{tikzpicture}
    }
    \parbox{0.49\textwidth}{
    \begin{tikzpicture}
    \begin{axis}[
                    ymode=log,
                    xmin=500,xmax=2750,
                    height=2.5in,
                    width=0.49\textwidth,
                    xlabel={\# points on a 9-sphere},
                    ylabel={Seconds},
                    legend pos=north west,
                ]
        \addaveraged[dionmark]{dion-new-dim-9.txt}
        \addlegendentry{Dionysus} 

        \addaveraged[geommark]{geom-exact-new-dim-9.txt}
        \addlegendentry{geometric} 
    \end{axis}
    \end{tikzpicture}
    }
    \caption{Comparison of our exact geometric bottleneck algorithm with
            \libraryname{Dionysus} for \dtype{normal} (left) and \dtype{real}
            (right) input.}
    \label{fig:bottleneck:comp_with_dionysus}
\end{figure*}
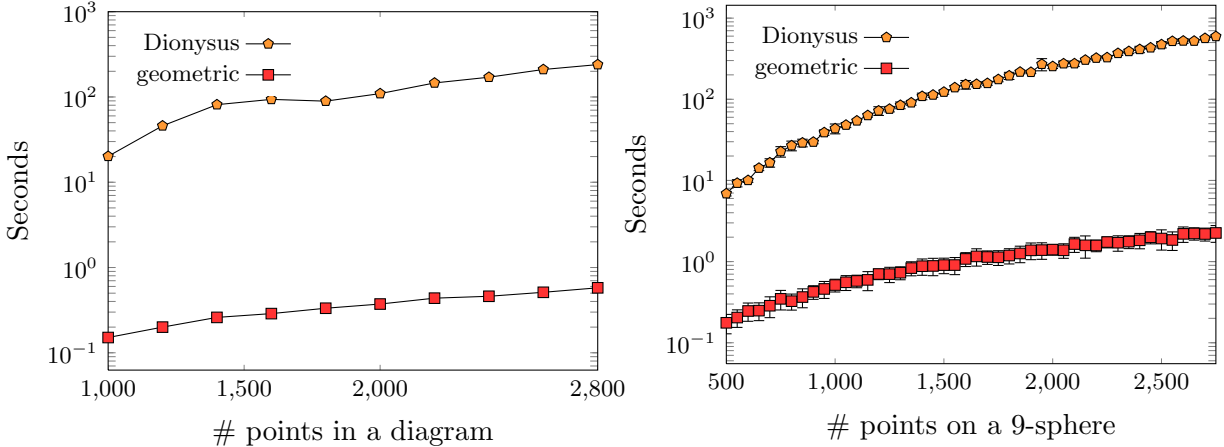

\section{Wasserstein matchings}
\label{sec:wasserstein}

We now fix $q\geq 1$ and describe an algorithm for computing the $q$-Wasserstein cost 
of a weighted graph $(U\sqcup V,E,w)$. Recall from Proposition~\ref{prop:q_to_1_reduction}
that we can restrict to the $1$-Wasserstein case by switching to the cost function $w^q$.
Moreover, we assume that $U=\{u_1,\ldots,u_n\}$ and $V=\{v_1,\ldots,v_n\}$ 
are finite sets, and we identify the elements with their indices.

\paragraph{Auction algorithm.}
The \emph{auction algorithm} of Bertsekas~\cite{bertsekas-distributed} is an
asymmetric approach to find a perfect matching in a weighted graph
that maximizes the sum of its edge weights.
One half of the bipartite graph is treated as ``bidders'',
the second half as ``objects.'' Initially, each object $j$ is assigned zero price,
$p_j = 0$, and each bidder $i$ extracts a certain benefit, $b_{ij}$, from object $j$.
Since we are interested in the minimum cost matching, we use the negation of the
edge weight as the bidder--object benefits, that is,
$b_{ij} = -w^q(i,j)$.
If the edge $(i,j)$ is not in the graph, $b_{ij} = -\infty$.
The auction algorithm maintains a (partial) matching $M$,
which is empty initially.
When $M$ becomes perfect, the algorithm stops.
During the execution of the algorithm,
matched bidders in $M$ are called \emph{assigned} (to an object),
and unmatched bidders are \emph{unassigned}.

The auction proceeds iteratively. In each iteration, one unassigned bidder $i$
chooses an object $j$ with the maximum value, defined as the benefit minus the
current price of the object, $v_{ij} = (b_{ij} - p_j$).
Object $j$ is assigned to the bidder; if it was assigned before,
the previous owner becomes unassigned.
Let $\Delta p_{ij}$ denote the difference of $v_{ij}$ and the value
of the second best object for bidder $i$;
$\Delta p_{ij}$ can be zero.
The price of object $j$ increases by $\Delta p_{ij}+\epsilon$,
where $\epsilon$ is a small constant needed to avoid infinite loops in cases where 
two bidders extract the same value from two objects. Without
$\epsilon$, the two could keep stealing the same object
from each other without increasing its price.

Our variant of the algorithm is called \emph{Gauss--Seidel auction}:
an iteration consists of only one bid, which is always satisfied.
An alternative, called the \emph{Jacobi auction}, proceeds by letting each
unassigned bidder place a bid in every iteration. If several bidders
want the same object, it is assigned to the bidder who offers the highest
price increment, $\Delta p_{ij} + \epsilon$.
The Jacobi auction, which was used in the ALENEX version of this paper~\cite{kmn-geometry},
has a drawback if many objects provide the same value to many bidders.
In that case, it may happen that all of these bidders bid for the same object
in one iteration, and all but one of them remain unassigned.
Since a Jacobi iteration is more expensive than a Gauss--Seidel iteration,
this may result in worse performance. Indeed, our experiments show
that switching to Gauss--Seidel auction improves the runtime by an
order of magnitude.

How do we choose $\epsilon$?
Small values give a better approximation of the exact answer; on the other hand,
the algorithm converges faster for large values of $\eps$.
Bertsekas suggests \textit{$\eps$-scaling} to overcome this problem:
running several rounds of the auction algorithm with decreasing values of
$\eps$, using prices from the previous round, but an empty matching, as an
initialization for the next round.  Following the recommendation of Bertsekas and Casta{\~n}on~\cite{bc-auction-parallel},
we initialize $\eps$ with the maximum edge cost 
divided by $4$ and divide $\eps$ by $5$ when starting a new round.

Iterating this procedure long enough would eventually yield the exact
Wasserstein distance~\cite{bertsekas-distributed}; however,
the number of rounds of $\epsilon$-scaling would in general be too high
for many practical problems. Instead, we use a termination condition
that guarantees a relative approximation of the exact value. We fix
some approximation parameter $\delta\in(0,1)$. 
After finishing a round of the auction algorithm
for $q$-Wasserstein matching
for some value $\epsilon>0$, let $d:=d_\epsilon$ be the $q$-th root of 
the cost of the obtained matching.
We stop if $d$ satisfies
\begin{equation}
    \label{eqn:auction_termination}
d^q\leq (1+\delta)^q(d^q-n\eps),
\end{equation}
and return $d$ as the result of the algorithm. We summarize the auction
in Algorithm~\ref{alg:auction}.
\noindent\begin{lemma}
    \label{lemma:rel_error_auction}
The return value $d$ of the algorithm satisfies
\[d\in[\opt,(1+\delta)\opt),\]
where $\opt$ denotes the exact $q$-Wasserstein distance.
\end{lemma}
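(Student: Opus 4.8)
The plan is to establish the two inequalities $\opt \le d$ and $d < (1+\delta)\opt$ separately.

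For the lower bound $\opt \le d$: I would first recall the basic guarantee of the auction algorithm with parameter $\eps$. A standard fact about $\eps$-complementary slackness (which the algorithm maintains by construction: every bidder is assigned to an object within $\eps$ of its best value) is that the matching $M$ produced at the end of a round is within an additive $n\eps$ of optimal in the benefit-maximization formulation. Translating back through $b_{ij} = -w^q(i,j)$, this says that the $1$-Wasserstein cost $d^q$ of $M$ (in the graph with weights $w^q$) satisfies $d^q - n\eps \le \opt^q$, where $\opt^q$ is the true $1$-Wasserstein cost, i.e.\ $\opt$ is the $q$-Wasserstein distance. Hence $\opt^q \ge d^q - n\eps$. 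Since $d$ is the cost of an actual perfect matching, we also trivially have $d^q \ge \opt^q$, giving $\opt \le d$ directly; the nontrivial content is the inequality $d^q - n\eps \le \opt^q$, which I would cite from Bertsekas~\cite{bertsekas-distributed} (or reprove via $\eps$-complementary slackness in one line).

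For the upper bound $d < (1+\delta)\opt$: here I would use the termination condition~\eqref{eqn:auction_termination} together with the bound just obtained. From $\opt^q \ge d^q - n\eps$ and the termination test $d^q \le (1+\delta)^q(d^q - n\eps)$, I substitute to get
\[
d^q \le (1+\delta)^q (d^q - n\eps) \le (1+\delta)^q \opt^q,
\]
and taking $q$-th roots (legitimate since all quantities are nonnegative and $t \mapsto t^{1/q}$ is monotone) yields $d \le (1+\delta)\opt$. To get the strict inequality, I note that $d^q - n\eps < d^q$ since $n\eps > 0$, so in fact $d^q \le (1+\delta)^q(d^q - n\eps) < (1+\delta)^q d^q$ forces... actually the cleaner route: combine $d^q \le (1+\delta)^q(d^q-n\eps)$ with $d^q - n\eps \le \opt^q$ and observe that if $d = (1+\delta)\opt$ exactly then the chain of inequalities collapses, forcing $n\eps \le 0$, a contradiction; hence the inequality is strict.

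The main obstacle is the first step: justifying the additive $n\eps$ guarantee $d^q - n\eps \le \opt^q$. This is the one place where I am invoking a property of the auction algorithm rather than manipulating the given termination condition, so I would want to make sure either to cite it precisely or to include the short argument: at termination each of the $n$ bidders is assigned to an object giving value within $\eps$ of its maximum (this is exactly what the $+\eps$ price increments enforce), and a summation/exchange argument over any optimal matching then shows the total benefit of $M$ is within $n\eps$ of the optimal total benefit — equivalently, the total cost is within $n\eps$ of optimal. Everything after that is a routine algebraic chase through~\eqref{eqn:auction_termination} and monotonicity of $q$-th roots.
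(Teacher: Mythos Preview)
Your proposal is correct and follows essentially the same route as the paper: invoke the Bertsekas guarantee $\opt^q \le d^q \le \opt^q + n\eps$ (equivalently $d^q - n\eps \le \opt^q$), combine with the termination condition~\eqref{eqn:auction_termination}, and take $q$-th roots. Your extra attempt to justify the strict upper bound is a bit wobbly (equality in both steps of the chain is not immediately contradictory), but the paper's own proof also only derives the non-strict inequality $d \le (1+\delta)\opt$, so you are not missing anything the paper supplies.
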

\begin{proof}
Because we raise all edge costs to the $q$-th power, 
the matching minimizing the sum of the edge costs has a cost of $\opt^q$.
Let $d^q$ be the cost of the matching computed by the auction algorithm,
after the last round of $\eps$-scaling, for a fixed $\eps$.
By the properties of the auction algorithm~(\cite{bertsekas1988auction}, Proposition 1), it holds (after every round)
that
\[\opt^q\leq d^q \leq \opt^q+n\eps.\]
Taking the $q$-th root yields the first inequality immediately. 
For the second inequality, note that
\[(1+\delta)^q\opt^q \geq (1+\delta)^q(d^q-n\eps)\geq d^q,\]
where the last inequality follows from the termination condition of the
algorithm. Taking the $q$-th root on both sides yields the result. 
\end{proof}

\begin{algorithm}
\DontPrintSemicolon
\KwIn{Two persistence diagrams $X$, $Y$ with $|X|,|Y|\leq n$, $q \geq 1$, $\delta > 0$ (maximal relative error) }
\KwOut{$\delta$-approximate $q$-Wasserstein distance $W_q(X, Y)$}
Initialize $d\gets 0$ and $\eps\gets \frac{5}{4}\cdot \left(\text{max. edge length}\right)$\;

\While{\ $d^q>(1+\delta)^q(d^q-n\eps)$\ }{
  $\eps \gets \eps / 5$\;
  Let $M$ be an empty matching\;
    \While{\mbox{there exists some unassigned bidder $i$}}{
        Find the best object $j$ with value $v_{ij}$ 
        and the second best object $k$ with value $v_{ik}$ for $i$\;
        Assign $j$ to $i$ in $M$ and increase the price of $j$ 
        by $(v_{ij}-v_{ik})+\eps$\;
    }  
  $d\gets$ $q$-th root of the cost of the (perfect) matching $M$\;
}
\Return{d}\;
\caption{{\sc Auction algorithm}}
\label{alg:auction}
\end{algorithm}

\paragraph{Bidding.}
The computational crux of the algorithm is for a bidder to determine the object of maximum value and the price increase.
The brute-force approach is for each bidder to do an exhaustive search over all
objects. Doing so requires linear running time per iteration.
But let us
consider what the search actually entails.
Bidder $i$ must find the two objects with highest and second-highest
$v_{ij}$ values.
Recall $v_{ij} = b_{ij} - p_j = -w^q(i,j) - p_j$,
and maximizing this quantity for a fixed $i$ is equivalent to
minimizing $w^q(i,j) + p_j$.

The first way to quickly find these objects uses lazy heaps.
Each bidder keeps all the objects in a heap, ordered by their value.
We also maintain a list of all the price changes (for any object), as well
as a record for each bidder of the last time its heap was updated.
Before making a choice, a bidder updates the values of all the objects in its
heap that changed prices since the last time the heap was updated.
The bidder then selects the two objects with the maximum value.
We note that this approach uses quadratic space,
since each bidder keeps a record of each object.

The second way to accelerate the search for the best object uses geometry and
requires only linear space.
Initially, when all the prices are zero, we can find the two best objects
by performing the proximity search in a k-d tree.
But we need to augment the k-d tree to take increasing prices into account.
We do so by storing the price of each point as its weight in the
k-d tree. At each internal node of the tree we record the minimum weight of any
node in its subtree. When searching, we prune subtrees if the $q$-th power of the
distance from the query point to the box containing all of the subtree's points,
plus the minimum weight in the subtree, exceeds the current second
best candidate.

Once a bidder selects the best object, it increases its price. We adjust the
subtree weights in the k-d tree by increasing the chosen object's
weight and updating the weights on the path to the root.
If the minimum weight does not change at some node on the path, we interrupt the
traversal.

The case of persistence diagrams requires special care.
We can distinguish between \emph{diagonal} and \emph{off-diagonal} bidders and
objects.
Diagonal bidders should bid for only one off-diagonal object, according to Lemma~\ref{lem:skew_lemma}.
Since the distance between diagonal points is $0$, the value of a diagonal object $j$
for a digonal bidder $i$ is just the opposite of its price, $v_{i,j} = -p_j$,
and we keep all diagonal objects in a heap ordered by the price.
When a diagonal bidder needs to find the best two objects,
it selects the top two elements of the heap and compares them with the only off-diagonal
object to which it can be assigned.

On the other hand, off-diagonal bidders can bid for every
off-diagonal object and only for one diagonal object (its projection).
We use one global k-d tree to get the best two off-diagonal objects and then
compare their values for the bidder with the value of bidder's projection,
so only off-diagonal objects are stored in the k-d tree.

\begin{figure}[t]
    \centering
    \begin{tikzpicture}
    \begin{axis}[
                    ymode=log,
                    ymax=10000000,
                    xtick={1000,2000,3000,4000,5000,6000,7000,8000,9000,10000},
                    xmin=1000,xmax=10000,
                    height=2.5in,
                    width=.49\textwidth,
                    xlabel={\# points in a diagram},
                    ylabel={Kilobytes},
                    legend pos=north west,
                ]
        \addaveraged[nongeommark]{gs-lazy-heap-normal-memory-100-100.txt}
        \addlegendentry{non-geom.}

        \addaveraged[geommark]{gs-kdtree-normal-memory-100-100.txt}
        \addlegendentry{geom.}
    \end{axis}
    \end{tikzpicture}

    \caption{Comparison of memory consumption of geometric and non-geometric versions of auction algorithm on \dtype{normal} instances.}
\label{fig:memory-consumption-wasserstein}
\end{figure}
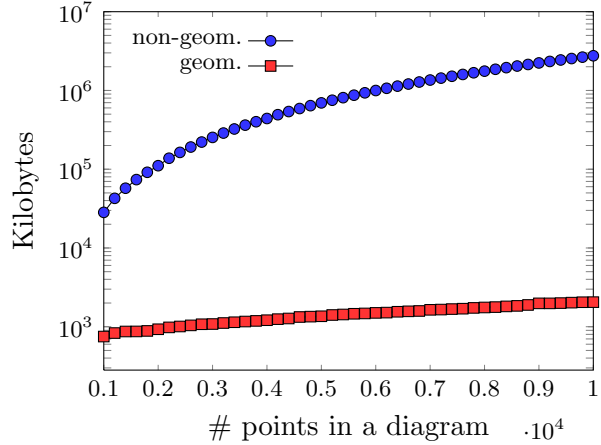

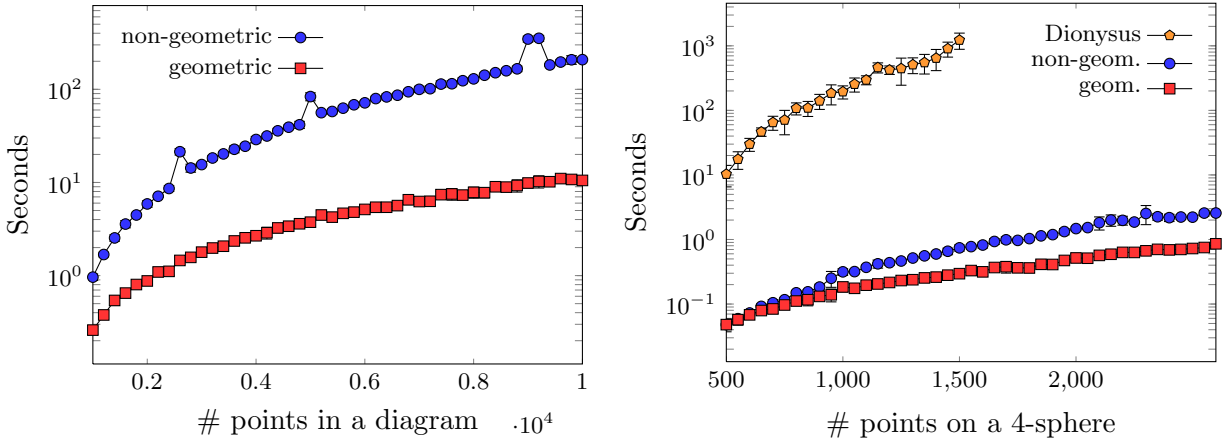
\begin{figure*}[t]
    \centering
    \parbox{0.49\textwidth}{
    \begin{tikzpicture}
    \begin{axis}[
                    ymode=log,
                    xtick={2000,4000,6000,8000,10000},
                    xmin=1000,xmax=10000,
                    height=2.5in,
                    width=.49\textwidth,
                    xlabel={\# points in a diagram},
                    ylabel={Seconds},
                    legend pos=north west,
                ]

        \addaveraged[nongeommark]{gs-lazy-heap-normal-100-100.txt}
        \addlegendentry{non-geometric}

        \addaveraged[geommark]{gs-kdtree-normal-100-100.txt}
        \addlegendentry{geometric}
    \end{axis}
    \end{tikzpicture}
    }
    \parbox{0.49\textwidth}{
    \begin{tikzpicture}
    \begin{axis}[
                    ymode=log,
                    xmin=500,xmax=2600,
                    xtick={500,1000,1500,2000},
                    height=2.5in,
                    width=0.49\textwidth,
                    xlabel={\# points on a 4-sphere},
                    ylabel={Seconds},
                    legend pos=north east,
                ]

        \addaveraged[dionmark]{dion-4sphere-new.txt}
        \addlegendentry{Dionysus}

        \addaveraged[nongeommark]{gs-heap-4sphere-new.txt}
        \addlegendentry{non-geom.}

        \addaveraged[geommark]{gs-kdtree-4sphere-new.txt}
        \addlegendentry{geom.}
    \end{axis}
    \end{tikzpicture}
    }

    \caption{Comparison of non-geometric and geometric variants of the auction
             algorithm on \dtype{normal} (left) and \dtype{real} (right) input,
             also with Dionysus on the \dtype{real} input.}
    \label{fig:wasserstein:geometric_non_geometric}
\end{figure*}

\paragraph{Experiments.}
Figure~\ref{fig:wasserstein:geometric_non_geometric} illustrates the running times of the
auction algorithm on the \dtype{normal} data, using lazy heaps and k-d trees.
In both cases, we compute a relative $0.01$-approximation.
The advantage of using geometry is evident: the algorithm
is faster by roughly a factor of $4$ for diagrams with $1000$ points,
and the factor becomes close to $20$ for diagrams with $10000$ points.
We used linear regression to empirically estimate the complexity, and the geometric
algorithm runs in $O(n^{1.6})$, while for the non-geometric algorithm the
estimated complexity is super-quadratic, $O(n^{2.3})$.
The non-geometric version only shows competitive running times because of the described
optimization with lazy heaps. This results in a severe increase in memory consumption,
as displayed in Figure~\ref{fig:memory-consumption-wasserstein}.

Again, we compare our geometric approach with
\libraryname{Dionysus}, which uses John Weaver's
implementation\footnote{\url{http://saebyn.info/2007/05/22/munkres-code-v2/}}
of the Hungarian algorithm~\cite{munkres-algorithms}.
Figure~\ref{fig:wasserstein:geometric_non_geometric} (right) shows the results for \dtype{real} instances.
The speed-up of our approach increases from a factor of $50$ for small instances to a factor of about $400$
for larger instances. For the \dtype{normal} data sets, the speed-up already exceeds a factor of $1000$
for diagrams of $1000$ points; we therefore omit a plot.

We emphasize that our test is slightly unfair, as it compares the exact algorithm
from \libraryname{Dionysus} with the $0.01$-approximation
provided by our implementation.
While such an approximation suffices for many applications
in topological data analysis, the question remains how much overhead
would be caused by an exact version of the auction algorithm.
A naive approach to get the exact result is to rescale the
input to integer coordinates and to choose $\eps$ such that
the approximation error is smaller than $1$.
We plan to investigate different possibilities to compute the exact distance
more efficiently.

\section{Wasserstein matchings for repeated points}
\label{sec:weighted_wasserstein}

For a weighted, complete, bipartite graph $G=(U\sqcup V,U\times V,w)$,
we call two vertices $u_1,u_2\in U$ \emph{identical}
if for all $v\in V$, $w(u_1,v)=w(u_2,v)$.
A pair of identical vertices in $V$ is defined symmetrically.
If $G$ is a geometric graph, two points with coinciding locations
are identical.
In the context of persistence diagrams,
this situation is common in applications, where the
range of possible scales on which features appear and disappear
is often discretized. The discretization places all points of the persistence
diagram on a finite grid.
For a fixed discretization of a fixed range, more and more identical points
appear as the data size grows.
This raises the question whether diagrams with many identical points
can be handled more efficiently.

\paragraph{Auction with integer masses.}
We use a variant of the auction algorithm~\cite{bc-auction}, which we explain next.
The input consists of two sets $U$ and $V$ of \emph{multi-points},
each given by its coordinates and integer multiplicity $m\geq 1$;
a multi-point represents $m$ identical points at the given location.
For brevity, we refer to the multiplicity as {\emph{mass}}.
The total mass of both sets is the same.
In analogy to the auction algorithm
from Section~\ref{sec:wasserstein}, we refer to the
elements of the respective sets as \emph{multi-bidders}
and \emph{multi-objects}. 
The elements of a multi-object
are not, in general, assigned to the same multi-bidder;
their prices can also differ. However, if two elements of a multi-object
are assigned to one multi-bidder, the algorithm guarantees that their prices are equal.
The algorithm decomposes a multi-object into \emph{slices}, where each slice
represents a fraction of the multi-object that is currently
not distinguished by the algorithm.
Formally, a slice is a four-tuple $(j, m_{i,j}, p_{i,j}, i)$ identifying the multi-object $j$ it belongs to,
the mass of the slice $m_{i,j}$, its price $p_{i,j}$, and the multi-bidder $i$
that it is currently assigned. The decomposition of multi-objects into slices defines an \emph{assignment}, which can be interpreted as a matching $M$ in the original graph (see Fig. \ref{fig:assignment_to_matching}):
A slice $(j,m_{i,j},p_{i,j},i)$ corresponds to $m_{i,j}$ edges in $M$ from $m_{i,j}$ elements of the multi-bidder $i$ to $m_{i,j}$ elements of the multi-object $j$
(hereby interpreting multi-bidders and multi-objects as sets of identical bidders/objects). 
Unassigned slices correspond to unmatched vertices. We call an assignment \emph{perfect} if the induced matching is perfect, 
and the \emph{cost} of the assignment is the cost of the corresponding matching.

The auction with integer masses is a procedure converging to an assignment with minimal cost. It uses the same high-level structure
as the 
auction described in Section~\ref{sec:wasserstein},
which we will refer to as the \emph{standard auction}.
It employs \mbox{$\epsilon$-scaling} with the same choices of parameters.
One round of $\epsilon$-scaling maintains an assignment 
and runs until the assignment is perfect, that is, all multi-bidders
are fully assigned to multi-objects. Every round proceeds in iterations.
In each iteration, one multi-bidder with unassigned mass is selected at random.
It acquires enough slices (possibly taking them away from other multi-bidders)
to assign all its missing mass and increases the prices of these slices.

\begin{figure}[t!]
        \centering
        \includegraphics[width=7cm]{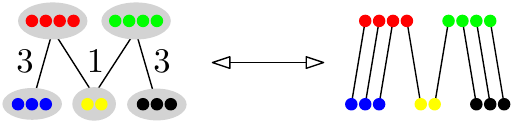}
        \caption{Correspondence between assignments and matchings. On the left-hand side there are two multi-bidders, each of mass 4, and 4 slices with masses 3, 3, 1, 1. A corresponding perfect matching is on the right-hand side. }
    \label{fig:assignment_to_matching}
\end{figure}

Specifically, an iteration proceeds as follows. We fix
a multi-bidder with some unassigned mass  $u\geq 1$, and
let $s_1, \dots, s_t$ be the slices assigned to it.
Conceptually, the algorithm
takes all possible slices except for $s_1, \dots, s_t$ and sorts them by their
value to the multi-bidder in decreasing order. We denote the sorted slices by
$s_{t+1},\ldots,s_N$; let $v_i$ denote the value of $s_i$ to the multi-bidder.

The multi-bidder takes the first $k$ slices $s_{t+1}, \dots, s_{t+k}$
such that their total mass $m$ is at least $u$.
If $m>u$, we split the ``leftover'' slice from $s_{t+k}$
whose mass is $m - u$ and whose price and owner remain unchanged;
we denote this newly created slice as $\tilde{s}_{t+k}$.
Now, the total mass of the slices $s_{t+1},\ldots,s_{t+k}$ is exactly $u$,
and we assign them to the multi-bidder.

Next, we increase the prices of all $t+k$ slices assigned to the multi-bidder.
Let $s_l$ with $l \geq t+k$ be a slice determined as follows:
if the slices $s_1,\ldots,s_{t+k}$ belong to at least two different multi-objects,
$s_l$ is the slice containing the $(m+1)\mbox{-st}$ unit of mass, that is,
$s_l$ is set to $s_{t+k+1}$ if we did not split the leftover slice, and
to $\tilde{s}_{t+k}$ otherwise. If all the $t+k$ slices are of a single multi-object,
then $s_l$ is defined to be the first slice among $s_{t+k+1},\dots,s_N$ that belongs to a different multi-object.
Let $v_l$ be the value of $s_l$ to the multi-bidder. We increase the prices of the
slices $\{ s_i \}_{1 \leq i \leq t+k}$ by $v_i - v_l + \epsilon$ to make them
as valuable to the multi-bidder as the slice $s_l$, up to $\epsilon$.

The original paper that presents this approach~\cite{bc-auction}
describes the Jacobi version of the algorithm, i.e., all bidders with unassigned
mass submit bids in one iteration, and the mass goes to the bidder who offered
the highest bid. The above description is the Gauss--Seidel variant of the same algorithm,
and it is straightforward to verify that the same proof of correctness
works for it, too. From the discussion in \cite{bc-auction}, it follows
that one can use the same formula as for the standard auction 
to estimate the relative error of the
matching obtained after each round of $\epsilon$-scaling.
Therefore, we can use the same termination condition as in~\eqref{eqn:auction_termination}
and the proof of Lemma~\ref{lemma:rel_error_auction} carries over.
We refer to \cite{bc-auction} for further details.

\paragraph{Diagonal points.}
Let $X_0$ and $Y_0$ be the of the off-diagonal
mult-points of two persistence diagrams.
Recall that for the computation of the $q$-Wasserstein distance,
we introduce the projection sets $Y_0'$ and $X_0'$
(also as a set of multi-points with masses inherited from their pre-images)
and set $X:=X_0\cup Y_0'$ and $Y:=Y_0\cup X_0'$,
which are sets of multi-points with equal total mass.
We can run the auction with integer masses, using the cost function $c^q$,
with $c$ as in~\eqref{eq:c_q}, and return the $q$-th root of the obtained
cost as our result.
However, we get a major improvement from using the modified cost function
$\tilde{c}$, defined in ~\eqref{eq:c_q_tilde}.
The modified function decreases the costs of all skew edges; accordingly,
$\tilde{c}^q$ treats all points in $X_0'$ as identical and all points in $Y_0'$
as identical.

In terms of the auction with integer masses, this means that 
we only need one additional multi-bidder (with large mass)
to represent all projections of multi-objects to the diagonal,
and vice versa. Specifically, 
writing $X_0=\{x_1,\ldots,x_k\}$ for the off-diagonal multi-bidders, let $m_X$
denote their total mass. Let $Y_0=\{y_1,\ldots,y_\ell\}$ denote
the off-diagonal multi-objects
with total mass $m_Y$. We introduce one additional multi-bidder $Y_0':=\{x_{k+1}\}$
(representing all projections of multi-objects), with mass $m_Y$,
and one additional multi-object $X_0':=\{y_{\ell + 1}\}$ with mass $m_X$.
The bidder--object benefits are set up according to~\eqref{eq:c_q_tilde}
(recall that $x_i'$ denotes the projection of
$x_i$ onto the diagonal):

\[ b_{i,j} = \begin{cases}
    -\| x_i - y_j \|_{\infty}^q, & i \leq k \mbox{ and } j \leq \ell \\
    -\| x_i - x_i' \|_{\infty}^q, & i \leq k \mbox{ and } j = \ell + 1 \\
    -\| y_j - y_j' \|_{\infty}^q, & i = k+ 1 \mbox{ and } j \leq \ell \\
    0,  & i = k+1 \mbox{ and } j = \ell + 1 \\
\end{cases}\]

\paragraph{Implementation.}
We implemented a geometric version of the auction with integer masses,
where the best slices of the off-diagonal multi-objects 
are determined using one global k-d tree, similar to
Section~\ref{sec:wasserstein}. Here, each leaf of the k-d tree represents
a multi-object, and its weight corresponds to the price of its cheapest slice.
For a fixed off-diagonal multi-bidder,
we can compute an upper bound on the value of all
multi-objects stored in a subtree of the k-d tree.
During a search, we maintain a candidate set of slices
whose total mass exceeds the unassigned mass of the multi-bidder,
and we can prune a subtree
if that upper bound is below the value of the worst candidate.
The weights in the k-d tree are updated as in Section~\ref{sec:wasserstein}.
The additional information required to compute the price increases
are gathered by similar techniques; we omit the details.
We did not implement a non-geometric version using lazy heaps because
it would suffer from the same quadratic space complexity as in the standard auction.

Again we need to deal with the diagonal multi-object and multi-bidder separately. 
We maintain a heap with the slices of the diagonal multi-object sorted by the price and
a heap with the slices of all multi-objects (including the diagonal one) sorted by their value for the diagonal
bidder. The diagonal bidder finds the best slices by simply traversing the latter heap.
An off-diagonal bidder first uses the k-d tree to find the best slices of 
off-diagonal objects. Then it starts traversing the heap with slices
of the diagonal object, replacing the off-diagonal slices
with the diagonal ones as long as the diagonal slices offer better values.
When the value of the next diagonal slice in the heap is below the minimal value 
of the currently accumulated slices, we stop traversing the heap
with diagonal slices. When slice prices are increased, we immediately update
the heaps.

\paragraph{Experiments.}
As input, we turn the aforementioned instances of \dtype{normal}
type into diagrams with integer masses.
For each point of the original diagram, we assign mass $m$,
drawn uniformly from the range $[\lceil k/2\rceil, \lfloor 3k/2\rfloor]$,
so that the average mass of a point is $k$.
In our experiments, we compare the standard auction and the auction with integer masses
for $k=1,10,50,100$.

We generated \dtype{normal} instances with 1,000 to 10,000 points, in increments
of 1,000, with 10 instances per size.
Figure~\ref{fig:wasserstein:sop_weigts_1_10_50_100} shows the average running times.
There is an overhead for mass 1 (a factor of roughly 4.5 in the figure).
This ratio is not constant: the overhead becomes larger when the number of points grows.
We also observe that it depends on the parameters of the distribution from which the points
were drawn.
For average mass 10, the auction with masses is comparable to the standard auction.
For higher masses, 50 and 100, the advantage of the former is evident.

There is no clear dependence between the running time and the average mass.
We took 4 instances with 10000 points each and tried larger average masses (with the same
$[\lceil k/2\rceil, \lfloor 3k/2\rfloor]$ distribution).
Figure~\ref{fig:wasserstein_sop_weight_dep} illustrates the result.
We can see that the running time does not increase much when the average mass increases,
and may even decrease. That seems to depend very much on the particular instance 
and the distribution of masses inside it. 

The memory consumption of auction with integer masses usually scales
linearly with the number of points (for fixed average mass).
In principle, the memory size can grow proportional to the total
mass of the point sets when all slices shrink to size one,
but such intensive slicing did not appear in our examples.

\begin{figure*}[t]
    \centering
    \parbox{0.49\textwidth}{
    \begin{tikzpicture}
    \begin{axis}[
                    xtick={2000,4000,6000,8000,10000},
                    xmin=1000,xmax=10000,
                    height=2.5in,
                    width=.49\textwidth,
                    xlabel={\# multi-points},
                    ylabel={Seconds},
                    title={average mass = 1},
                    legend pos=north west,
                ]

        \addaveraged{weight_1_nonweighted.txt}
        \addlegendentry{standard}

        \addaveraged{weight_1_sop.txt}
        \addlegendentry{with masses}
    \end{axis}
    \end{tikzpicture}
    }
    \parbox{0.49\textwidth}{
    \begin{tikzpicture}
    \begin{axis}[
                    xtick={2000,4000,6000,8000,10000},
                    xmin=1000,xmax=10000,
                    height=2.5in,
                    width=.49\textwidth,
                    xlabel={\# multi-points},
                    ylabel={Seconds},
                    title={average mass = 10},
                    legend pos=north west,
                ]

        \addaveraged{weight_10_nonweighted.txt}
        \addlegendentry{standard}

        \addaveraged{weight_10_sop.txt}
        \addlegendentry{with masses}
    \end{axis}
    \end{tikzpicture}
    }\\

        \parbox{0.49\textwidth}{
    \begin{tikzpicture}
    \begin{axis}[
                    ymode=log,
                    xtick={2000,4000,6000,8000,10000},
                    xmin=1000,xmax=10000,
                    height=2.5in,
                    width=.49\textwidth,
                    xlabel={\# multi-points},
                    ylabel={Seconds},
                    title={average mass = 50},
                    legend pos=south east,
                ]

        \addaveraged{weight_50_nonweighted.txt}
        \addlegendentry{standard}

        \addaveraged{weight_50_sop.txt}
        \addlegendentry{with masses}
    \end{axis}
    \end{tikzpicture}
    }
    \parbox{0.49\textwidth}{
    \begin{tikzpicture}
    \begin{axis}[
                    ymode=log,
                    xtick={2000,4000,6000,8000,10000},
                    xmin=1000,xmax=10000,
                    height=2.5in,
                    width=.49\textwidth,
                    xlabel={\# multi-points},
                    ylabel={Seconds},
                    title={average mass = 100},
                    legend pos= south east,
                ]

        \addaveraged{weight_100_nonweighted.txt}
        \addlegendentry{standard}

        \addaveraged{weight_100_sop.txt}
        \addlegendentry{with masses}
    \end{axis}
    \end{tikzpicture}
    }

    \caption{Comparison of standard auction and auction with integer masses
             on \dtype{normal} data for mass $1$ (upper-left) and
             average masses $10$ (upper-right), $50$ (lower-left) and $100$ (lower-right).}
    \label{fig:wasserstein:sop_weigts_1_10_50_100}
\end{figure*}
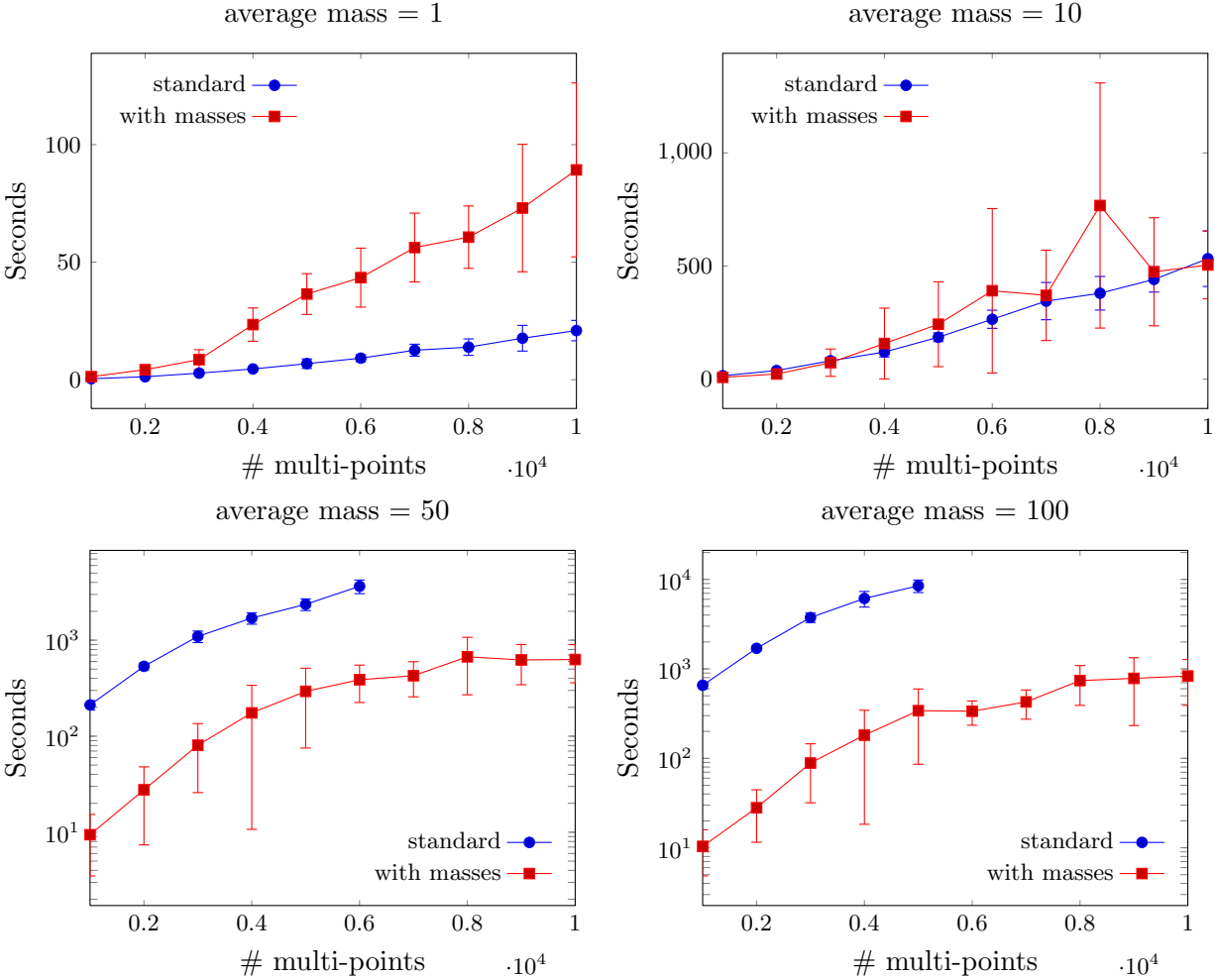

\begin{figure}[t]
    \centering
    \begin{tikzpicture}
    \begin{axis}[
                    xmode=log,
                    ymin=0,ymax=1500,
                    xtick={200,400,800,1600,3200,6400,12800},
                    xticklabels={200,400,800,1600,3200,6400,12800},
                    height=4.5in,
                    width=.79\textwidth,
                    xlabel={Average mass},
                    ylabel={Running time, s},
                    legend pos=north west,
                ]
        \addplot+[geommark] table[x index=0, y index=1] {large_weights_transpose.txt} ;
        \addlegendentry{Instance 1.}

        \addplot+[nongeommark] table[x index=0, y index=2] {large_weights_transpose.txt} ;
        \addlegendentry{Instance 2.}

        \addplot+[greentrianglemark] table[x index=0, y index=3] {large_weights_transpose.txt} ;
        \addlegendentry{Instance 3.}

        \addplot+[browndiamondmark] table[x index=0, y index=4] {large_weights_transpose.txt} ;
        \addlegendentry{Instance 4.}

    \end{axis}
    \end{tikzpicture}

    \caption{Dependence of the running time from the average mass for four particular instances of size 10000. Note the exponential scale on the $x$-axis.}
\label{fig:wasserstein_sop_weight_dep}
\end{figure}
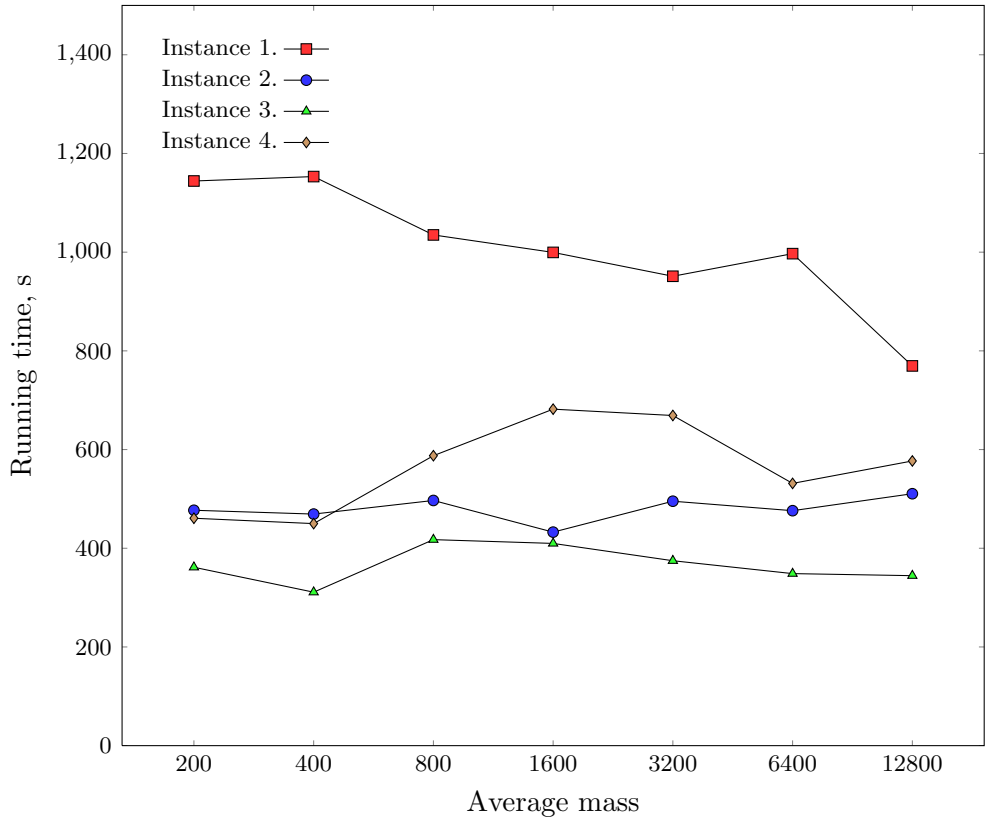

\section{Conclusion}
We have demonstrated that geometry helps to compute bottleneck and Wasserstein distances
of bipartite point sets in two dimensions. Our approach leads to a faster computation
of distances between persistence diagrams. Therefore, we expect our software
to have an immediate impact on the computational pipeline of topological data analysis.

For bottleneck matchings,
an interesting question would be how our k-d tree implementation compares
in practice with the (theoretically) more time efficient, but more space demanding alternative
of range trees, and with other point location data structures.

For Wasserstein matchings, we plan to further improve our implementation of the auction
algorithm, including a parallel version for large instances.
Simple heuristics can also improve special cases. For example, if $X$ and $Y$
are persistence diagrams and $S\subset X\cap Y$ is the set of common off-diagonal points,
it holds for $q=1$ that $W_q(X,Y)=W_q (X\setminus S,Y\setminus S)$, as one shows very easily.
This property allows to remove common points in the diagram before
applying the auction algorithm.
We also wonder how the auction approach compares with the various
alternatives proposed in ~\cite{bdm-assignment},
and for which of these approaches can geometry help compute the Wasserstein
distance efficiently, either exactly or approximately.

A natural approximation scheme for computing the Wasserstein distance for very large instances consists of
placing a finite grid over $\R^2$ and ``snapping'' points to their closest grid vertex. The result
is an instance with a potentially high multiplicity in each grid vertex.
The problem with this approach is the approximation error
introduced by the discretization step. A crude error bound is the total number of points in both diagrams
multiplied by the diameter of the grid cells. An interesting question is to evaluate more refined
discretization schemes with respect to their practical performance.

\section*{Acknowledgements}
We thank Sergio Cabello for pointing out that the worst-case complexity
of k-d trees and range trees remains valid under deletions of points,
and for further valuable remarks on an earlier draft of the paper.

Michael Kerber and Arnur Nigmetov acknowledge support by the Max Planck Center for Visual Computing and Communication.
Dmitriy Morozov is supported by Advanced Scientific Computing Research, Office of Science, U.S. Department of Energy,
under Contract DE-AC02-05CH11231.

\bibliographystyle{plain}
\bibliography{bib}

\end{document}